\newcommand{\ex}{{\rm e}}
\newcommand{\Ac}{\mathcal{A}}
\newcommand{\Bc}{\mathcal{B}}
\newcommand{\Cc}{\mathcal{C}}
\newcommand{\Ec}{\mathcal{E}}
\newcommand{\Fc}{\mathcal{F}}
\newcommand{\Nc}{\mathcal{N}}
\newcommand{\Xc}{\mathcal{X}}
\newcommand{\Yc}{\mathcal{Y}}
\newcommand{\Zc}{\mathcal{Z}}
\newcommand{\Xh}{{\hat{X}}}
\newcommand{\Yh}{{\hat{Y}}}
\newcommand{\qh}{{\hat{q}}}
\newcommand{\xh}{{\hat{x}}}
\newcommand{\xt}{{\tilde{x}}}
\def\a{\alpha}
\def\b{\beta}
\def\g{\gamma}
\def\d{\delta}
\def\e{\epsilon}
\def\l{\lambda}
\DeclareMathOperator\E{E}
\let\P\relax
\DeclareMathOperator\P{P}
\newcommand\ie{i.e.,\xspace}
\def\textiid{i.i.d.\@\xspace}
\newcommand\iid{\ifmmode\text{ i.i.d. } \else \textiid \fi}
\newcommand{\ind}{\mathbbmss{1}}
\DeclareMathOperator*{\argmin}{arg\,min}
\title{Universal Compressed Sensing}
\author{Shirin Jalali, H. Vincent Poor}% <-this % stops a space
\date{} % delete this line to display the current date
\begin{document}

\newtheorem{property}{Property}
\newtheorem{question}{Question}
\newtheorem{claim}{Claim}
\newtheorem{guess}{Conjecture}
\newtheorem{definition}{Definition}
\newtheorem{fact}{Fact}
\newtheorem{assumption}{Assumption}
\newtheorem{theorem}{Theorem}
\newtheorem{lemma}{Lemma}
\newtheorem{remark}{Remark}
\newtheorem{ctheorem}{Corrected Theorem}
\newtheorem{corollary}{Corollary}
\newtheorem{proposition}{Proposition}
\newtheorem{example}{Example}

\maketitle
%--------------------%%--------------------%%--------------------%%--------------------%
\begin{abstract}
The main promise of compressed sensing is  accurate recovery of high-dimensional structured signals from an underdetermined set of randomized linear projections. Several types of structure such as sparsity and low-rankness have already been explored in the literature. For each type of structure, a recovery algorithm has been developed  based on the properties of  signals with that type of structure. Such  algorithms recover any signal   complying with the assumed model from its sufficient number of linear projections. However, in many practical situations the underlying structure of the signal is not known, or is only partially known. Moreover, it is desirable to have recovery algorithms that can be applied to signals with different types of structure.

In this paper, the problem of developing  universal algorithms for compressed sensing of stochastic processes is studied.  First, R\'enyi's notion of information dimension (ID) is generalized to analog stationary  processes. This provides a measure of complexity for such  processes and is connected  to  the number of measurements required for their   accurate recovery. Then a minimum entropy pursuit (MEP) optimization approach  is proposed, and it is proven that it can reliably recover any  stationary process satisfying some mixing constraints  from sufficient number of randomized linear measurements, without having any prior information about the  distribution of the process. It is proved that a Lagrangian-type approximation of the MEP optimization problem, referred to as Lagrangian-MEP problem, is identical to a  heuristic implementable algorithm proposed by Baron et al. It is shown that for the right choice of parameters the Lagrangian-MEP algorithm, in addition to having  the same asymptotic  performance as  MEP optimization,   is also robust to the measurement noise. For memoryless sources with a discrete-continuous mixture distribution, the  fundamental limits of the minimum number of required measurements by a non-universal compressed sensing decoder is characterized by Wu et al. For such sources, it is proved that there is no loss in universal coding, and both the MEP and the Lagrangian-MEP  asymptotically achieve the optimal performance.
\end{abstract}

%--------------------%%--------------------%%--------------------%%--------------------%
\section{Introduction}

Consider the fundamental problem of compressed sensing (CS): a signal $x_o^n\in\mathds{R}^n$ is measured through a data acquisition process  modeled by  a linear projection system: $y_o^m=Ax_o^n$, where $A\in\mathds{R}^{m\times n}$ denotes the measurement matrix. The signal $x_o^n$ is usually  high-dimensional, and  the number of measurements is much smaller than the ambient dimension of the signal, \ie $m\ll n$. The decoder is interested in recovering $x_o^n$ from the  measurements $y_o^m$.  Since the system of linear equations described by $y_o^m=Ax^n$ has infinitely many solutions, without any side information, clearly it is impossible to recover $x_o^n$ from $y_o^m$. However, with some extra  information about the structure of $x_o^n$,  one might be able to   recover  $x_o^n$ from $y_o^m$ reliably. Intuitively, having this extra information enables the decoder to, among the signals that satisfy the measurement constraints,  search for the one that is (more) consistent with the structure model. For instance for sparse signals, \ie signal with $k=\|x_o^n\|_0\ll n$,\footnote{For $x^n\in\mathds{R}^n$, $\|x_o^n\|_0\triangleq |\{i: x_i\neq 0\}|$.} the decoder might try to find the signal with minimum $\ell_0$-norm among the signals that satisfy $Ax^n=y_o^m$. In that case the reconstruction signal will be
\[
\xh^n_o=\argmin_{Ax^n=y_o^n}\|x^n\|_0.
\]
While this optimization is not  practical, it has well-known approximations that can  be implemented efficiently \cite{Donoho:06,CandesT:06,CandesR:06}.  This result can also be extended to several other structures such as group-sparsity and low-rankness. (Refer to \cite{RichModelbasedCS, ChRePaWi10, VeMaBl02, ReFaPa10, ShCh11, HeBa12, HeBa11, DoKaMe06} for some examples of the other structures studied in the literature.) In each of these cases, the signal is known to have some specific structure, and the decoder exploits this side information  to recover the signal from its under-sampled set of linear projections.

Structures that are already studied in the compressed sensing literature are often simple models such as sparsity. However, natural signals typically exhibit  much more complicated and diverse patterns. Therefore, it is desirable to have a recovery algorithm that can be applied to sources with diverse structures without having some prior information about the source model. Such algorithms are referred to as universal algorithms   in the information theory literature. More formally universal algorithms are defined as algorithms that achieve the optimal performance without knowing the source distribution. Existence of such algorithms has been proved for several different problems such as compression  \cite{LZ77, Sakrison:70,Ziv:72,NeuhoffG:75,NeuhoffS:78}, denoising \cite{kolmogrov_sampler,dude} and prediction \cite{MerhavGutmanFeder92,MerhavF:98}.

In order to  develop a universal compressed sensing algorithm,  there are some fundamental questions that need to be addressed: What does it mean for an analog\footnote{Throughout the paper,  an analog signal refers to a continuous-alphabet discrete-time signal.} signal to be of low complexity or structured?  How can the structure or the complexity of an analog signal be measured?  Is it possible to design a universal compressed sensing decoder that is able to recover structured signals from their randomized linear projections\footnote{Throughout the paper, linear measurements acquired by a measurement matrix generated from a random distribution is denoted by randomized linear projections or randomized linear measurements.} without knowing the underlying structure of the signal?

The problem of  universal  compressed sensing has already been studied in  the literature  \cite{BaronD:11,BaronD:12,JalaliM:11,JalaliM:12,JalaliM:14}.  In \cite{BaronD:11} and \cite{BaronD:12}, the authors propose a heuristic implementable  algorithm for universal compressed sensing of stochastic processes. In \cite{JalaliM:11}, the authors define the Kolmogorov information dimension (KID) of a  deterministic analog  signal as a  measure of its complexity.  The KID of a signal $x_o^n$ is defined   as the growth rate of the Kolmogorov complexity of the quantized version of $x_o^n$ normalized by the $\log$ of the number of quantization levels, as the number of quantization levels grows to infinity. Employing this measure of complexity, the authors in \cite{JalaliM:11} and \cite{JalaliM:14}   propose a minimum complexity pursuit (MCP) optimistion as a universal signal recovery decoder.  MCP is based on Occam's razor \cite{occam}, \ie among all  signals  satisfying the linear measurement constraints, MCP seeks the one with the lowest complexity.  While MCP proves the existence of universal compressed sensing algorithms, it is not an implementable algorithm, since it is based on minimizing Kolmogorov complexity \cite{Solomonoff,KolmogorovC}, which is not computable.

In this paper we focus on stochastic signals  and develop an implementable algorithm for universal compressed sensing of stochastic processes. To achieve this goal,  we first need to develop a measure of complexity for stochastic  processes that differentiates between different processes in terms of their complexities. To define such a measure,  we   extend the R\'enyi's notion of the information dimension of an analog random variable \cite{Renyi:59}  to define the information dimension of a stochastic process. As we will show, this extension is consistent with   R\'enyi's information dimension such that the   information dimension of a  memoryless stationary  process $X=\{X_i\}_{i=1}^{\infty}$ is equal to the R\'enyi information dimension of its first-order marginal distribution ($X_1$). It has recently been  proved  that for independent and identically distributed (i.i.d.) processes with a mixture of discrete-continuous distribution, their R\'enyi information dimension  characterizes the fundamental limits of (non-universal) compressed sensing \cite{WuV:10}.

Again consider the basic problem of compressed sensing:  $X_o^n$ is generated by an analog stationary process $X=\{X_i\}_{i=1}^{\infty}$,  the decoder observes its linear projections $Y_o^m=AX_o^n$, where $m< n$, and is interested in recovering $X_o^n$. To recover $X_o^n$ from $Y_o^m$, in the same spirit of the MCP algorithm, we propose  minimum entropy pursuit (MEP) optimization, which among all the signals $x^n$ satisfying the measurement constraint $Y_o^n=Ax^n$, outputs the one whose quantized version has the minimum conditional empirical entropy. We prove that, asymptotically, for a proper choice of the quantization level and the order of the conditional empirical entropy,  and having slightly more than  the (upper) information dimension of the process times the ambient dimension of the process randomized linear measurements, MEP presents an asymptotically lossless estimate of $X_o^n$. While MEP is not easy to implement, we also present an implementable version with the same asymptotic performance guarantees as MEP. The implementable approximation of the MEP optimization, which we refer to as Lagrangian-MEP,  is identical  to the heuristic algorithm proposed and implemented in  \cite{BaronD:11} and \cite{BaronD:12} for universal compressed sensing. We  prove that for the right choice of parameters,  the Lagrangian-MEP algorithm has the same asymptotic performance as   MEP and in addition is also robust to  measurement noise. That is, the asymptotic performance of the Lagrangian-MEP algorithm does not change when the measurement vector is corrupted by a small-enough measurement noise vector. For memoryless sources with a discrete-continuous mixture distribution, we  show that there is no loss in the performance due to universal coding, and both the MEP optimization and the Lagrangian-MEP algorithm achieve the optimal performance derived in \cite{WuV:10}.

The organization of the paper is as follows. Section \ref{sec:background} introduces the notation used in the paper and reviews some related background. Section \ref{sec:overview} presents an overview of the  main contributions of the paper.  In Section \ref{sec:ID-ergodic},  we first generalize   the  R\'enyi's notion of the information dimension of a random variable \cite{Renyi:59}  and define the information dimension of a stationary process. In Section \ref{sec:universal-cs}, we introduce the MEP  optimization for universal compressed sensing, and  also provide an implementable version of MEP, namely Lagrangian-MEP, which is the same heuristic algorithm proposed in \cite{BaronD:11} and \cite{BaronD:12},  and prove its optimality and its robustness to measurement noise.  The proofs of all of the main results are given in Section \ref{sec:proof}.  Section \ref{sec:conclusion} concludes the paper.

%--------------------%%--------------------%%--------------------%%--------------------%
\section{Background}\label{sec:background}

In this section we first introduce the notation that is used throughout the paper. Then, we review two basic concepts, namely  empirical distribution and universal lossless compression, which we employ in developing our results on universal compressed sensing.

\subsection{Notation}
Calligraphic letters such as $\Xc$ and $\Yc$ denote sets. For a finite set $\Xc$, let $|\Xc|$ denote the size of  $\Xc$.  Given vectors $u^n,v^n\in\mathds{R}^n$, let $\langle u^n,v^n \rangle$ denote their inner product, \ie $\langle u^n,v^n \rangle\triangleq \sum_{i=1}^nu_iv_i$. Also, $\| u^n\|_2\triangleq (\sum_{i=1}^nu_i^2)^{0.5}$ denotes   the $\ell_2$-norm of $u^n$. For $1\leq i\leq j \leq n$, $u_i^j \triangleq (u_i,u_{i+1},\ldots,u_j)$. To simplify the notation, $u^j\triangleq u_1^j$.   The set of all finite-length binary sequences is denoted by  $\{0,1\}^*$, \ie $\{0,1\}^*\triangleq\cup_{n\geq 1}\{0,1\}^n$. Similarly, $\{0,1\}^{\infty}$ denotes the set of infinite-length binary sequences. Throughout the paper $\log$ refers to logarithm to the basis of $2$ and $\ln$ refers to the natural logarithm.

Random variables are represented  by upper-case letters such as $X$ and $Y$. The alphabet of the random variable $X$ is denoted by $\Xc$. Given a sample space $\Omega$ and  event  $\Ac\subseteq \Omega$, $\ind_{\Ac}$ denotes the indicator function of   $\Ac$. Given $x\in\mathds{R}$, $\delta_x$ denotes the Dirac measure with an atom at $x$.

Given a real number $x\in\mathds{R}$,  $\lfloor x\rfloor$  ($\lceil x\rceil$) denotes the largest (the smallest) integer number smaller (larger) than $x$. Further,  $[x]_b$ denotes the $b$-bit quantized version of $x$ that results  from taking the first $b$ bits in the binary expansion of $x$. That is, for   $x=\lfloor x\rfloor +\sum_{i=1}^{\infty}2^{-i}(x)_i$, where $(x)_i\in\{0,1\}$,
\[
[x]_b\triangleq \lfloor x\rfloor+\sum_{i=1}^{b}2^{-i}(x)_i.
\]
Also, for $x^n \in\mathds{R}^n$, define
\[
[x^n]_b\triangleq ([x_1]_b,\ldots,[x_n]_b).
\]
For a positive integer $\ell$, let 
\[
\langle x \rangle_{\ell}\triangleq { \lfloor {\ell }x \rfloor \over \ell}.
\]
By this definition, $\langle x \rangle_{\ell}$ is a finite-alphabet approximation of the random variable $X$, such that $0< x- \langle x \rangle_{\ell} \leq {1\over \ell}.$

%--------------------%%--------------------%
%
\subsection{Conditional empirical entropy}\label{sec:cond-ent}
%As mentioned earlier, the first step towards developing a universal compressed sensing algorithm is to find a universal measure of complexity for analog signals. For discrete signals, there already exist multiple well-studied measures of complexity. In this section, we briefly review the  conditional empirical entropy function,  which is one  such measure. We will later employ this function to develop a measure of complexity for analog signals.

 Consider a stochastic process $X=\{X_i\}_{i=1}^{\infty}$, with finite alphabet  $\Xc$ and probability measure $\mu(\cdot)$. The entropy rate of  a  stationary  process $X$ is defined as
\begin{align}
\bar{H}(X)\triangleq \lim_{n\to \infty}{H(X_1,\ldots,X_n)\over n}.\label{eq:entropy-rate}
\end{align}
%In fact,  for stationary ergodic processes,  as $n$ grows to infinity,  $\log {1\over \mu(X^n)}$ converges almost surely  to  the entropy rate of the process.

The $k$-th order empirical distribution induced by $x^n\in\Xc^n$, $p_k(.|x^n)$ is defined as
\[
p_k(a^k|x^n)={|\{i: x_{i-k}^{i-1}=a^k, 1\leq i\leq n\}| \over n},
\]
where we make a circular  assumption such that $x_{j}=x_{j+n}$, for $j\leq 0$.

\begin{definition}
The conditional empirical entropy induced by $x^n\in\Xc^n$, $\hat{H}_k(x^n)$, is equal to  $H(U_{k+1}|U^k)$, where $U^{k+1}\sim p_{k+1}(\cdot|x^n)$.
 \end{definition}

For a stationary finite-alphabet process $X$,  if $k$ grows to infinity as $k=o(\log n)$,  $\hat{H}_k(X^n)$ converges, almost surely, to the entropy rate of the process $X$, \ie $\hat{H}_k(X^n)\to \bar{H}(X)$, almost surely \cite{Shields:96}. Therefore, if we fix the size of the source alphabet,   $\hat{H}_k$ is a universal estimator of the source entropy rate, which in turn is a measure   of the source complexity.

%--------------------%%--------------------%

\subsection{Universal lossless compression}

One of the most well-studied universal coding problems is the problem of universal lossless compression. The entropy rate of a stationary process characterizes the minimum number of bits per symbol required for its lossless compression. Universal lossless compression algorithms, such as the Lempel-Ziv  algorithm \cite{LZ}, asymptotically spend the same number of bits per symbol for compressing any stationary ergodic process, without knowing its distribution. To design a universal lossless compression algorithm, similar to universal compressed sensing, one needs to develop a universal measure of complexity. The fundamental difference between the two problems is that while compressed sensing is mainly concerned with continuous-alphabet sources, lossless compression is concerned  with  discrete-alphabet processes. In the following, we briefly review the mathematical definition of a universal lossless compression  algorithm.

Consider the problem of universal lossless compression of discrete stationary ergodic sources described as follows. A family of source codes $\{\Cc_n\}_{n\geq 1}$ consists of a sequence of codes corresponding to different blocklengths. Each code $\Cc_n$ in this family is defined by an encoder  function $f_n$ and a decoder function $g_n$ such that
\[
f_n:\Xc^n\to \{0,1\}^{*},
\]
and
\[
g_n:\{0,1\}^{*}\to\hat{\Xc}^n.
\]
Here $\hat{\Xc}$ denotes the reconstruction alphabet which is also  assumed to be discrete and in many cases is equal to $\Xc$.  The encoder $f_n$ maps each source block $X^n$ to a binary sequence of finite length, and the decoder $g_n$ maps the coded bits back to the signal space as $\hat{X}^n=g_n(f_n(X^n))$. Let  $l_n(f_n(X^n))=|f_n(X^n)|$ denote the length of the binary sequence assigned to the sequence $X^n$. We assume that the codes are lossless (non-singular), \ie $f_n(x^n)\neq f_n(\xt^n)$, for all $x^n\neq \xt^n$. A family of lossless codes is called universal, if
\[
 {1\over n}\E[l_n(f_n(X^n))]\to \bar{H}(X),
 \]
and $\P(X^n\neq \Xh^n ) \to 0$, as $n$ grows to infinity,  for any discrete  stationary  process $X$.  A family of lossless codes is called \emph{point-wise} universal, if
\[
 {1\over n}l_n(f_n(X^n))\to \bar{H}(X),
 \]
almost surely,  for any discrete  stationary ergodic process $X$. The   Lempel-Ziv  algorithm \cite{LZ} is both a universal and a point-wise universal lossless compression algorithm. For a discrete-alphabet sequence $u^n$, $\ell_{\rm LZ}(u^n)$ denotes the length of encoded version of $u^n$ by the Lempel-Ziv  algorithm.

%--------------------%%--------------------%
%--------------------%%--------------------%%--------------------%%--------------------%

\section{Overview of results}\label{sec:overview}

In recovering a structured signal $x_o^n$ from undersampled linear measurements $y_o^m=Ax_o^n$, $m<n$, non-universal algorithms look for the signal that complies  with both the measurements  and the known structure.  For several types of structure, it has been proved that with enough  measurements, this procedure yields a reliable estimate of the input vector $x_o^n$. In the universal compressed sensing problem, the decoder does not have any information about the structure or the distribution of the source. In the deterministic settings, the authors in \cite{JalaliM:14}  proved  that universal compressed sensing is possible and proposed the KID measure of complexity and the MCP universal recovery algorithm. However, as explained earlier,  both  the MCP optimization and the KID measure are based on the notion of Kolmogorov complexity and hence are not computable. Moreover, KID  measures the complexity of an individual sequence, not a stochastic process. The focus of this paper is on stationary processes. Therefore, as the first step, in Section \ref{sec:ID-ergodic}, we develop a measure of complexity, referred to as ID,  for stationary processes.   After that, in Section \ref{sec:universal-cs} we focus on  developing a universal compressed sensing algorithm. In Section \ref{sec:mep},  we propose the MEP optimization, as a universal compressed sensing method, which does not require any prior knowledge about the source model. Then, in Section \ref{sec:theory-mep}, we study the theoretical performance of the MEP. In order to develop a universal compressed sensing algorithm for stochastic processes,  an estimator of their ID   based on the process realizations is required. Therefore, in Section \ref{eq:mixing}, we  study special stationary processes such as $\Psi^*$-mixing processes for which we are able to design such an estimator. Then, for such processes, in Section \ref{sec:mep-perf}, we prove that for the right set of parameters, the normalized number of measurements required by the MEP or by its implementable version,  namely, the Lagrangian-MEP, is slightly more than the ID of the source process.    

%--------------------%%--------------------%%--------------------%%--------------------%
 \section{ID of stationary processes}\label{sec:ID-ergodic}

 Consider an   analog random variable $X$ and integer $n\in\mathds{N}$. R\'enyi defined the upper and lower information dimensions of a random variable $X$ in terms of the entropy  of $\langle X \rangle_n $ as
\[
\bar{d}(X)=\limsup\limits_n {H(\langle X \rangle_n) \over \log n},
\]
and
\[
\underline{d}(X)=\liminf\limits_n {H(\langle X \rangle_n) \over \log n},
\]
respectively \cite{Renyi:59}. If $\bar{d}(X)=\underline{d}(X)$, then the information dimension of the random variable $X$ is defined as
\[
d(X)= \lim\limits_{n\to \infty} {H(\langle X \rangle_n) \over \log n}.
\]

 While the R\'enyi  information dimension measure can  be used in measuring the complexity of memoryless continuous-alphabet sources, it cannot directly be applied to  analog stationary processes with memory. For instance, a piecewise-constant signal generated by  a stationary first-order Markov process is expected to be of low complexity. However, the complexity of such processes cannot be evaluated using the   R\'enyi  information dimension or the entropy rate function.  In the following, we develop a measure of complexity for stationary analog sources. To achieve this goal, we carefully combine the definitions of the entropy rate and the R\'enyi information dimension, to capture both the source memory and the fact that the source is continuous-alphabet.

 Define the $b$-bit quantized version of a stochastic process $X=\{X_i\}_{i=1}^{\infty}$ as $[X]_b=\{[X_i]_b\}_{i=1}^{\infty}$. Consider a stationary  process $X=\{X_i\}_{i=1}^{\infty}$;  then since $[X]_b$ is derived from a stationary coding of  $X$, it is also a stationary  process.  We define the $k$-th order upper information dimension of a process $X$ as
\[
\bar{d}_k(X)=\limsup_{b\to \infty} {H([X_{k+1}]_b|[X^k]_b) \over b}.
\]
Similarly, the $k$-th order lower information dimension of  $X$ is defined as
\[
\underline{d}_k(X)=\liminf_{b\to \infty} {H([X_{k+1}]_b|[X^k]_b) \over b}.
\]

\begin{lemma}\label{lemma:1}
Both $\bar{d}_k(X)$ and $\underline{d}_k(X)$  are non-increasing in $k$.
\end{lemma}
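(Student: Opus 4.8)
The plan is to establish the monotonicity at the level of each fixed quantization depth $b$, and then pass to the limit in $b$. The crucial observation is that for every fixed $b$ the quantized sequence $[X]_b=\{[X_i]_b\}_{i=1}^{\infty}$ is itself a finite-alphabet stationary process (as already noted in the excerpt, since it is a stationary coding of $X$). This reduces the claim to the classical fact that, for a stationary discrete process, the conditional entropy of the next symbol given a fixed number of past symbols is non-increasing in the number of conditioning symbols.

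Concretely, fix $b$ and abbreviate $Z_i\triangleq[X_i]_b$, so that $\{Z_i\}$ is stationary. The goal is the per-$b$ inequality
\[
H(Z_{k+2}\,|\,Z^{k+1})\leq H(Z_{k+1}\,|\,Z^{k}).
\]
First I would apply the fact that conditioning does not increase entropy to drop the oldest conditioning symbol, giving $H(Z_{k+2}\,|\,Z^{k+1})\leq H(Z_{k+2}\,|\,Z_2^{k+1})$. Then I would invoke stationarity: shifting the index by one does not change the joint distribution, so $H(Z_{k+2}\,|\,Z_2^{k+1})=H(Z_{k+1}\,|\,Z_1^{k})=H(Z_{k+1}\,|\,Z^{k})$. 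Combining the two gives the desired per-$b$ inequality. Rewriting in the original notation, this reads $H([X_{k+2}]_b\,|\,[X^{k+1}]_b)\leq H([X_{k+1}]_b\,|\,[X^{k}]_b)$ for every $b$.

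Dividing both sides by $b$ preserves the inequality for each $b$, so I would then take limits. Since $\bar{d}_k(X)$ and $\underline{d}_k(X)$ are defined as the $\limsup$ and $\liminf$ over $b\to\infty$ of these normalized quantities, and since both $\limsup$ and $\liminf$ are monotone operations (if $a_b\leq c_b$ for all $b$, then $\limsup_b a_b\leq\limsup_b c_b$ and $\liminf_b a_b\leq\liminf_b c_b$), I obtain $\bar{d}_{k+1}(X)\leq\bar{d}_k(X)$ and $\underline{d}_{k+1}(X)\leq\underline{d}_k(X)$ simultaneously. This establishes that both sequences are non-increasing in $k$.

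There is no genuine obstacle in this argument; the only point requiring care is that the monotonicity must be secured at each fixed $b$ \emph{before} passing to the limit, because $\limsup$ and $\liminf$ only preserve inequalities that hold uniformly in the index. The two ingredients in the per-$b$ step, conditioning reduces entropy and shift-invariance of the joint law, are both standard, and the whole proof is essentially the finite-alphabet entropy-rate monotonicity argument applied uniformly across quantization levels.
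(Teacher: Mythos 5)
Your proposal is correct and follows essentially the same route as the paper's proof: the per-$b$ inequality $H([X_{k+2}]_b\,|\,[X^{k+1}]_b)\leq H([X_{k+2}]_b\,|\,[X_2^{k+1}]_b)=H([X_{k+1}]_b\,|\,[X^{k}]_b)$ via conditioning-reduces-entropy and stationarity, followed by taking $\limsup$ and $\liminf$ over $b$. Your remark about securing the inequality uniformly in $b$ before passing to the limit is exactly the (implicit) content of the paper's final step.
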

\begin{proof}
For  a stationary process $[X]_b$, for any value of $k$,
\begin{align*}
{H([X_{k+2}]_b|[X^{k+1}]_b) \over b}&\leq{H([X_{k+2}]_b|[X^{k+1}_2]_b) \over b}\\
&= {H([X_{k+1}]_b|[X^k]_b) \over b}.
\end{align*}
Therefore, taking $\liminf$ and $\limsup$ of both sides as $b$ grows to infinity  yields the desired result.
\end{proof}

\begin{definition}[Upper/lower information dimension]
For a stationary  process $X$, if $\lim_{k\to\infty}\bar{d}_k(X)$ exists, we define the upper information dimension of process $X$ as
\[
\bar{d}_o(X)=\lim_{k\to\infty}\bar{d}_k(X).
\]
Similarly, if  $\lim_{k\to\infty}\underline{d}_k(X)$ exists, the  lower information dimension of process $X$ is defined as
\[
\underline{d}_o(X)=\lim_{k\to\infty}\underline{d}_k(X).
\]
 If $\underline{d}_o(X) =\bar{d}_o(X)$, ${d}_o(X)\triangleq \underline{d}_o(X) =\bar{d}_o(X)$ is defined as the information dimension of the process $X$.
\end{definition}

\begin{lemma}\label{lemma:1-2}
Consider a  stationary  process $X$, with $\Xc=[l,u]$, where $l<u$ and $l,u\in\mathds{R}$. Then,   $\bar{d}_k(X)\leq 1$ and $\underline{d}_k(X)\leq 1$, for all $k$.
\end{lemma}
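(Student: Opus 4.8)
The plan is to bound the conditional entropy $H([X_{k+1}]_b\mid[X^k]_b)$ by a single unconditional entropy and then exploit the fact that the bounded alphabet $[l,u]$ caps the number of values the $b$-bit quantizer can output. First I would invoke the elementary inequality that conditioning does not increase entropy, so that for every $b$,
\[
H([X_{k+1}]_b\mid[X^k]_b)\leq H([X_{k+1}]_b).
\]
This reduces the whole problem to controlling the entropy of one quantized sample, after which the conditioning on the past $[X^k]_b$ plays no further role.

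Next I would count the size of the effective alphabet of $[X_{k+1}]_b$. Since $X_{k+1}\in[l,u]$, its integer part $\lfloor X_{k+1}\rfloor$ takes at most $\lfloor u\rfloor-\lfloor l\rfloor+1$ distinct values, while its $b$-bit fractional expansion $\sum_{i=1}^{b}2^{-i}(X_{k+1})_i$ contributes at most $2^b$ possibilities. Hence $[X_{k+1}]_b$ assumes at most $(\lfloor u\rfloor-\lfloor l\rfloor+1)2^b$ distinct values. Because the entropy of a discrete random variable never exceeds the logarithm of the cardinality of its support,
\[
H([X_{k+1}]_b)\leq\log\!\left((\lfloor u\rfloor-\lfloor l\rfloor+1)2^b\right)=b+\log(\lfloor u\rfloor-\lfloor l\rfloor+1).
\]

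Dividing by $b$ and chaining the two displays gives
\[
{H([X_{k+1}]_b\mid[X^k]_b)\over b}\leq 1+{\log(\lfloor u\rfloor-\lfloor l\rfloor+1)\over b}.
\]
Since $l$ and $u$ are fixed, the second term vanishes as $b\to\infty$, so the $\limsup$ and the $\liminf$ of the right-hand side both equal $1$. By the definitions of $\bar{d}_k(X)$ and $\underline{d}_k(X)$ this yields $\bar{d}_k(X)\leq 1$ and $\underline{d}_k(X)\leq 1$ for every $k$, as claimed.

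There is no substantive obstacle here; the statement is essentially a maximum-entropy bound combined with the finiteness forced by a bounded alphabet. The one place demanding care is the counting step: it is precisely the boundedness of $[l,u]$ that pins the integer part to finitely many values, and it is this finiteness (rather than any property of the process) that prevents $\log$ of the support size from growing faster than linearly in $b$. Stating the support cardinality cleanly, accounting correctly for the endpoints $l$ and $u$, is the only spot where a careless off-by-one could slip in, but even a loose constant in place of $\lfloor u\rfloor-\lfloor l\rfloor+1$ would suffice, since any bound of the form $(\text{const})\cdot 2^b$ gives the same limit.
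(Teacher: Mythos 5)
Your proof is correct and follows essentially the same route as the paper: bound the conditional entropy by the logarithm of the quantized alphabet size (which is at most a constant times $2^b$ because $[l,u]$ is bounded), divide by $b$, and take $\limsup$/$\liminf$. Your version is if anything slightly more careful in the counting step than the paper's, which directly writes $H([X_{k+1}]_b\mid[X^k]_b)\leq\log((u-l)2^b)$, but the two arguments are the same in substance.
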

\begin{proof}
Note that $H([X_{k+1}]_b|[X^k]_b)\leq \log ((u-l)2^b)=\log(u-l)+b$, for all $b$ and all $k$, and therefore,
\[
{1\over b}H([X_{k+1}]_b|[X^k]_b)\leq 1+{\log(u-l)\over b}.
\]
Taking $\limsup$ and $\liminf$ of both sides as $b$ grows to infinity yields   $\bar{d}_k(X)\leq 1$ and $\underline{d}_k(X)\leq 1$,  for all $k$.
\end{proof}

For  stationary bounded processes,  from Lemmas \ref{lemma:1} and \ref{lemma:1-2},  $\bar{d}_k(X)$ and $\underline{d}_k(X)$ are monotonic bounded sequences. Therefore, $\lim_{k\to\infty}\bar{d}_k(X)$ and $\lim_{k\to\infty}\underline{d}_k(X)$ both exist, and the upper and lower information dimensions of such processes are well-defined.

The entropy rate of a discrete stationary process can be defined either as \eqref{eq:entropy-rate} or equivalently as $\bar{H}(X)=\lim_{k\to\infty}H(X_k|X^{k-1})$.  In the same spirit, the following lemma presents an equivalent  representation of the upper information dimension of a process.

\begin{lemma}\label{lemma:eq-rep}
For  a stationary  process $X$, with upper information dimension $\bar{d}_o(X)$,
\[
\bar{d}_o(X)=\lim_{k\to\infty}{1\over k} \Big(\limsup_{b\to\infty}{H([X^k]_b)\over b}\Big).
\]
%and
%\[
%\underline{d}_o(X)=\lim_{k\to\infty}{1\over k} \Big( \liminf_{b\to\infty}{H([X^k]_b)\over b}\Big).
%\]
\end{lemma}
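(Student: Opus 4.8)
The plan is to mirror the classical proof that the entropy rate $\bar H(X)=\lim_k H(X_k\mid X^{k-1})$ equals the Ces\`aro limit $\lim_k \frac1k H(X^k)$, carried out one extra layer up so that it survives the $\limsup_{b\to\infty}$ normalization by the number of quantization bits. Write $D_k\triangleq \limsup_{b\to\infty} H([X^k]_b)/b$, so that the right-hand side of the claim is $\lim_{k\to\infty} D_k/k$, and abbreviate the normalized conditional terms by $g_j(b)\triangleq H([X_{j+1}]_b\mid [X^j]_b)/b$ (with $g_0(b)=H([X_1]_b)/b$), so that $\bar d_j(X)=\limsup_{b\to\infty} g_j(b)$. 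The chain rule gives, for every $b$ and $k$,
\[
\frac{H([X^k]_b)}{b}=\sum_{j=0}^{k-1} g_j(b).
\]

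First I would record the pointwise-in-$b$ monotonicity that already underlies Lemma~\ref{lemma:1}: for each fixed $b$, the sequence $g_j(b)$ is non-increasing in $j$. This is exactly the computation in the proof of Lemma~\ref{lemma:1} (dropping the oldest symbol from the conditioning only increases the conditional entropy, and stationarity realigns the indices), applied \emph{before} taking any limit in $b$. From this monotonicity I get two elementary inequalities, each valid for every $b$. Since every summand dominates the last one,
\[
\frac{1}{k}\sum_{j=0}^{k-1} g_j(b)\geq g_{k-1}(b),
\]
and, splitting the sum at any $m\le k$ and bounding the tail termwise by its first term $g_m(b)$,
\[
\sum_{j=0}^{k-1} g_j(b)\leq \sum_{j=0}^{m-1} g_j(b)+(k-m)\,g_m(b).
\]

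Next I would pass to the $\limsup$ over $b$ in both inequalities. For the first, pointwise domination gives $D_k/k=\limsup_b \frac1k\sum_{j=0}^{k-1} g_j(b)\geq \limsup_b g_{k-1}(b)=\bar d_{k-1}(X)$; letting $k\to\infty$ yields $\liminf_k D_k/k\geq \bar d_o(X)$. For the second, the subadditivity of $\limsup$, namely $\limsup_b(a_b+c_b)\leq \limsup_b a_b+\limsup_b c_b$, gives $D_k\leq D_m+(k-m)\bar d_m(X)$; dividing by $k$ and letting $k\to\infty$ with $m$ fixed gives $\limsup_k D_k/k\leq \bar d_m(X)$, after which $m\to\infty$ gives $\limsup_k D_k/k\leq \bar d_o(X)$. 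The two bounds sandwich the quantity, so $\lim_k D_k/k$ exists and equals $\bar d_o(X)$, which is the claim.

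The step that needs the most care is the interchange with $\limsup_b$: unlike in the discrete entropy-rate proof, where every quantity is a genuine limit, here I can only push the $\limsup$ through a sum in one direction, so I must arrange each inequality so that its $\limsup$ lands on the correct side. Concretely, the lower bound must come from a pointwise domination (where $\limsup$ is monotone) and the upper bound from the subadditivity of $\limsup$; matching these orientations to the monotonicity of $g_j(b)$ is the only real subtlety. For bounded processes Lemma~\ref{lemma:1-2} guarantees that all the $D_k$ and $\bar d_k(X)$ are finite, so the limsup arithmetic involves no indeterminate $\infty-\infty$ form and is unambiguous.
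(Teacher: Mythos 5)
Your proof is correct. The lower bound is exactly the paper's: chain rule plus stationarity gives $H([X^k]_b)\ge k\,H([X_k]_b\mid [X^{k-1}]_b)$, and monotonicity of $\limsup$ under pointwise domination yields $\liminf_k D_k/k\ge \bar d_o(X)$. For the upper bound you diverge slightly from the paper: the paper pushes $\limsup_b$ through the entire $k$-term chain-rule sum termwise, obtaining $\limsup_b H([X^k]_b)/(bk)\le \frac1k\sum_{i=0}^{k-1}\bar d_i(X)$, and then invokes the Ces\`aro mean theorem to conclude that this average converges to $\bar d_o(X)$. You instead exploit the monotonicity of $g_j(b)$ in $j$ \emph{before} taking any limit in $b$ (the paper only uses this monotonicity after the $\limsup$, in Lemma~\ref{lemma:1}), which lets you bound the tail of the sum by $(k-m)g_m(b)$ and reduces the $\limsup$-subadditivity step to a two-term sum, so that no Ces\`aro argument is needed. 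Both routes are valid; yours is marginally more self-contained, while the paper's makes the role of the sequence $\bar d_i(X)$ and its limit more explicit. Your closing remark about finiteness (needed so that $D_m/k\to 0$ and the $\limsup$ arithmetic is unambiguous) is a point the paper leaves implicit, and is correctly handled by Lemma~\ref{lemma:1-2} for bounded processes.
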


The following proposition proves that the information dimension of stationary memoryless  processes is equal to the R\'enyi information dimension of their first order marginal distribution. This implies that our notion of  information dimension for stochastic processes is consistent with the R\'enyi's notion of information dimension for random variables.

\begin{proposition}\label{prop:1}
For an i.i.d.~process $X=\{X_i\}_{i=1}^{\infty}$, $\bar{d}_o(X)$ ($\underline{d}_o(X)$) is equal to $\bar{d}(X_1)$ ($\underline{d}(X_1)$), the R\'enyi upper (lower) information dimension of $X_1$.\end{proposition}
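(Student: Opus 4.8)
The plan is to reduce the process-level information dimension to R\'enyi's scalar information dimension in three stages: first eliminate the conditioning, then identify the two quantizers, and finally pass from a dyadic subsequence of quantization levels to all levels.

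First I would exploit the i.i.d.\ assumption to collapse the dependence on $k$. Since $[X_i]_b$ is a deterministic memoryless function of $X_i$ alone, the quantized process $[X]_b$ is again i.i.d., so $[X_{k+1}]_b$ is independent of the block $[X^k]_b$. Hence $H([X_{k+1}]_b \mid [X^k]_b) = H([X_1]_b)$ for every $k$ and $b$, which gives $\bar{d}_k(X) = \limsup_{b\to\infty} H([X_1]_b)/b$, independent of $k$. Consequently $\bar{d}_o(X) = \lim_{k\to\infty}\bar{d}_k(X) = \limsup_{b\to\infty} H([X_1]_b)/b$, and the identical computation with $\liminf$ handles $\underline{d}_o(X)$.

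Second, I would identify the two quantizers. Directly from the definitions, taking the first $b$ fractional bits of $x$ is the same as $\lfloor 2^b x\rfloor/2^b$, i.e.\ $[x]_b = \langle x\rangle_{2^b}$. Therefore $H([X_1]_b) = H(\langle X_1\rangle_{2^b})$ and, writing $b = \log 2^b$, we obtain $\bar{d}_o(X) = \limsup_{b\to\infty} H(\langle X_1\rangle_{2^b})/\log 2^b$. The right-hand side is exactly R\'enyi's defining ratio $H(\langle X_1\rangle_n)/\log n$ evaluated along the dyadic subsequence $n = 2^b$.

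The remaining and main step is to show that this dyadic-subsequence $\limsup$ (resp.\ $\liminf$) coincides with the full one over all $n$, so that it equals $\bar{d}(X_1)$ (resp.\ $\underline{d}(X_1)$). The key technical estimate is a quantizer-comparison bound: if $x$ is confined to a cell of $\langle\cdot\rangle_m$, then $\langle x\rangle_n$ can take at most two values whenever $n \le m$, because $nx$ then ranges over an interval of length $n/m \le 1$, across which $\lfloor\cdot\rfloor$ changes at most once. Using this I get $H(\langle X_1\rangle_n)\le H(\langle X_1\rangle_{2^{b+1}})+1$ and $H(\langle X_1\rangle_n)\ge H(\langle X_1\rangle_{2^b})-1$ for every $n$ with $2^b\le n<2^{b+1}$. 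Dividing by $\log n$ (using $b\le\log n<b+1$) and letting $n\to\infty$ sandwiches the full $\limsup$ between two copies of the dyadic $\limsup$, while the trivial subsequence inequality supplies the reverse bound; the $\liminf$ case is symmetric. I expect this subsequence interchange, and in particular verifying that the additive $O(1)$ entropy defects vanish after dividing by $\log n$, to be the part requiring the most care, whereas the collapse of the conditioning and the quantizer identity are immediate.
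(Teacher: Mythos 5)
Your proof is correct and follows essentially the same route as the paper: collapse the conditioning via independence of the quantized i.i.d.\ process, identify $[x]_b=\langle x\rangle_{2^b}$, and then pass from the dyadic subsequence $n=2^b$ to all $n$. The only difference is that the paper imports this last subsequence-equivalence step by citing Proposition 2 of Wu and Verd\'u, whereas you prove it directly with the (correct) two-values-per-cell quantizer-comparison bound, which makes your argument self-contained but not substantively different.
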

\begin{proof}
Since the process is memoryless,  for any quantization level $b$, $H([X_{k+1}]_b|[X^k]_b)=H([X_{k+1}]_b)$ and since it is stationary, $H([X_{k+1}]_b)=H([X_1]_b)$.  Therefore,
\[
\bar{d}_k(X)=\bar{d}_0(X)=\limsup_{b} {H([X_1]_b) \over b}.
\]
As proved in Proposition 2 of  \cite{WuV:10},
\[
\bar{d}(X_1)= \limsup\limits_b {H(\langle X_1\rangle_{2^b})  \over b}.
\]
Since $\langle X_1\rangle_{2^b}=[X_1]_b$, this yields the desired result.
\end{proof}

To clarify the notion of information dimension, in the following we present several examples of different stationary processes and evaluate their information dimensions.

The following theorem, which follows from Theorem 3 of  \cite{Renyi:59} combined with Proposition \ref{prop:1}, characterizes the information dimension of i.i.d~processes, whose components are drawn from a mixture of continuous and discrete distribution.

\begin{theorem}[Theorem 3 in \cite{Renyi:59}]\label{thm:0} Consider an i.i.d.~process $\{X_i\}_{i=1}^{\infty}$, where each $X_i$ is distributed according to
\[
(1-p)f_d+pf_c,
\]
where $f_d$ and $f_c$ represent  a discrete measure and an absolutely continuous measure, respectively. Also, $p\in[0,1]$ denotes the probability that $X_i$ is drawn from the continuous distribution $f_c$. Assume that $H(\lfloor X_1\rfloor)<\infty$. Then,
\[
\bar{d}_o(X)=\underline{d}_o(X)={d}_o(X)=p.
\]
\end{theorem}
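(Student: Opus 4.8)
The plan is to reduce everything to R\'enyi's original Theorem 3 via Proposition \ref{prop:1}. Since the process $\{X_i\}$ is i.i.d., Proposition \ref{prop:1} tells us that $\bar{d}_o(X)=\bar{d}(X_1)$ and $\underline{d}_o(X)=\underline{d}(X_1)$, so the entire problem collapses to computing the R\'enyi upper and lower information dimensions of the single random variable $X_1$. Thus the only thing left to establish is that for a scalar random variable with mixture distribution $(1-p)f_d+pf_c$ satisfying $H(\lfloor X_1\rfloor)<\infty$, both R\'enyi dimensions equal $p$. This is precisely the content of Theorem 3 in \cite{Renyi:59}, so at one level the proof is simply: apply Proposition \ref{prop:1}, then invoke \cite{Renyi:59}.

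If a self-contained argument is desired, I would compute $\lim_{b\to\infty} H(\langle X_1\rangle_{2^b})/b$ directly. First I would split the expectation defining the entropy of the quantized variable according to whether the realization came from the discrete part or the continuous part of the mixture. The key heuristic is that the discrete component contributes only a \emph{bounded} amount of entropy (the atoms are resolved after finitely many bits, so their contribution to $H(\langle X_1\rangle_{2^b})$ stays $O(1)$ as $b\to\infty$), while the continuous component contributes entropy that grows linearly in $b$ with slope exactly $1$ (a continuous density behaves locally like Lebesgue measure, so quantizing to resolution $2^{-b}$ adds roughly one bit of entropy per bit of resolution). Weighting the continuous part by its probability $p$ then yields slope $p$ for the combined quantity $H(\langle X_1\rangle_{2^b})/b$, giving $\bar{d}(X_1)=\underline{d}(X_1)=p$.

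Making the continuous-part estimate rigorous is the main obstacle. One must show $H(\langle X_1\rangle_{2^b})=pb+o(b)$, i.e.\ that the quantization entropy of the absolutely continuous component grows like $b$ up to lower-order terms, which requires controlling the differential entropy together with the discretization correction; the standard tool is the identity $H(\langle Y\rangle_{2^b})=b+h(Y)+o(1)$ valid for a random variable $Y$ with finite differential entropy $h(Y)$, but this requires integrability conditions on $f_c$. The hypothesis $H(\lfloor X_1\rfloor)<\infty$ is exactly what controls the unbounded-support issue and guarantees the integer-part entropy stays finite, so the fractional-resolution entropy is what drives the growth rate. The cleanest route, and the one I would actually take, is to avoid re-deriving these delicate estimates and instead cite \cite{Renyi:59} for the scalar computation, since the whole point of the statement is that it \emph{follows from} Theorem 3 of \cite{Renyi:59} combined with Proposition \ref{prop:1}; the novelty here is only the reduction from the process to its marginal, which Proposition \ref{prop:1} has already supplied.
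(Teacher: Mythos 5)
Your proposal matches the paper exactly: the paper gives no separate proof of this theorem, stating only that it ``follows from Theorem 3 of \cite{Renyi:59} combined with Proposition \ref{prop:1},'' which is precisely your reduction from the process to its first-order marginal followed by a citation of R\'enyi's scalar result. Your additional sketch of a self-contained argument is a reasonable bonus but is not needed, as you yourself conclude.
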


From Theorem \ref{thm:0}, for an i.i.d.~process with components drawn from an absolutely continuous distribution\footnote{A probability distribution is called   absolutely continuous, if it has a probability density function (pdf). } the information dimension is equal to one. As a reminder, from Lemma \ref{lemma:1-2}, for    sources with bounded alphabet,   $d_o(X) \leq 1$. Therefore, from Theorem \ref{thm:0}, memoryless sources with an absolutely continuous distribution have maximum complexity. As $p$, the weight of the continuous component,  decreases from one, the information dimension of the source, or equivalently its complexity, decreases as well.

Processes with piecewise constant realizations  are one of the standard models in image processing, and  are  studied in various problems such as denoising  and compressed sensing. Such processes can be modeled as a first-order Markov process.    Theorem \ref{thm:1} evaluates the information dimension of such processes, and shows that their complexity depends on the rate of their jumps. Before that, Theorem \ref{thm:bound-do} connects the information dimension  of a Markov process of order $l$ to its $l$-th order information dimension.

\begin{theorem}\label{thm:bound-do}
Consider a stationary Markov process $X$ of order $\ell$. Then,
\[
\limsup_{b\to\infty} {H([X_{l+1}]_b|X^l) \over b}\leq \bar{d}_o(X)\leq \bar{d}_{\ell}(X),
\]
and
\[
\liminf_{b\to\infty} {H([X_{l+1}]_b|X^l)\over b}\leq  \underline{d}_o(X)\leq \underline{d}_{\ell}(X).
\]
\end{theorem}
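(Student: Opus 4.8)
The plan is to separate the two inequalities in each display. The right-hand (upper) bounds are immediate from Lemma \ref{lemma:1}: that lemma shows $\bar{d}_k(X)$ is non-increasing in $k$, so its limit $\bar{d}_o(X)$ satisfies $\bar{d}_o(X)\leq \bar{d}_\ell(X)$, and the identical argument applied to $\underline{d}_k(X)$ gives $\underline{d}_o(X)\leq \underline{d}_\ell(X)$. (Both limits exist because the sequences are non-increasing and bounded below by $0$.) The left-hand (lower) bounds carry the real content and exploit the Markov structure.

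For the lower bounds, the core step is to bridge from the \emph{quantized} conditioning $[X^k]_b$ appearing in the definition of $\bar{d}_k(X)$ to the \emph{unquantized}, length-$\ell$ conditioning $X^\ell$ on the left. I would establish, for every $k\geq \ell$ and every $b$, the chain
\[
H([X_{k+1}]_b|[X^k]_b)\;\geq\; H([X_{k+1}]_b|X^k)\;=\;H([X_{k+1}]_b|X_{k-\ell+1}^k)\;=\;H([X_{\ell+1}]_b|X^\ell).
\]
The inequality is a data-processing step: since $[X^k]_b$ is a deterministic function of $X^k$, conditioning on $X^k$ can only reduce the entropy of the discrete variable $[X_{k+1}]_b$, i.e. $H(Y|W)=H(Y|W,g(W))\leq H(Y|g(W))$ with $W=X^k$, $g(W)=[X^k]_b$, $Y=[X_{k+1}]_b$. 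The first equality uses the order-$\ell$ Markov property, under which $[X_{k+1}]_b$ is conditionally independent of $X^{k-\ell}$ given $X_{k-\ell+1}^k$ once $k\geq \ell$; the second equality is stationarity, shifting the block $(X_{k-\ell+1},\ldots,X_{k+1})$ onto $(X_1,\ldots,X_{\ell+1})$.

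Dividing the chain by $b$ and taking $\limsup_{b\to\infty}$, which respects the pointwise-in-$b$ inequality, yields $\bar{d}_k(X)\geq \limsup_{b\to\infty} H([X_{\ell+1}]_b|X^\ell)/b$ for every $k\geq \ell$; letting $k\to\infty$ gives the first lower bound, and replacing $\limsup$ by $\liminf$ throughout gives the second. The point I would be most careful about is the data-processing inequality with the continuous conditioning variable $X^k$; but since the conditioned variable $[X_{k+1}]_b$ is discrete, $H([X_{k+1}]_b|X^k)$ is well-defined and the identity above applies verbatim, so this is a formality rather than a genuine obstacle.
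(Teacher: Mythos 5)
Your proof is correct and follows essentially the same route as the paper: the upper bounds via Lemma \ref{lemma:1}, and the lower bounds by adding conditioning to get down to $H([X_{\ell+1}]_b\mid X^\ell)$ via the order-$\ell$ Markov property and stationarity, then taking $\limsup$/$\liminf$ in $b$. The only cosmetic difference is that you condition on the full $X^k$ while the paper conditions on $([X^k]_b, X^k_{k-\ell+1})$ and invokes the Markov chain $[X^k]_b \to X^k_{k-\ell+1} \to [X_{k+1}]_b$; these are interchangeable.
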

\begin{proof}
The upper bounds on both cases follow from Lemma \ref{lemma:1}.
To prove the lower bound, note that for $k>l$,
\begin{align}
H([X_{k+1}]_b|[X^k]_b])&\geq H([X_{k+1}]_b|[X^k]_b],X^{k}_{k-l+1})\nonumber\\
&\stackrel{(a)}{=}H([X_{k+1}]_b|X^{k}_{k-l+1})\nonumber\\
&\stackrel{(b)}{=}H([X_{l+1}]_b|X^l),\label{eq:Hk-Hl-Markov}
\end{align}
where $(a)$ holds because $X$ is a Markov process of order $l$ and therefore $[X^k]_b \to X^{k}_{k-l+1}\to [X_{k+1}]_b$. Equality (b) follows from the stationarity of $X$. Taking $\limsup$ of the both sides of \eqref{eq:Hk-Hl-Markov}, it follows that
\[
\bar{d}_k(X)=\limsup_{b\to\infty} {H([X_{k+1}]_b|[X^k]_b])\over b}\geq \limsup_{b\to\infty}{H([X_{l+1}]_b|X^l)\over b}.
\]
Similarly, taking $\liminf$ of the both sides yields the lower bound on $ \underline{d}_o(X)$.
\end{proof}

\begin{theorem}\label{thm:1}
Consider a first-order stationary Markov process $X=\{X_i\}_{i=1}^{\infty}$, such that  conditioned on $X_{t-1}=x_{t-1}$, $X_t$ has a mixture of discrete and absolutely continuous distribution  equal to   $(1-p) \delta_{x_{t-1}}+ p f_c $, where $f_c$ represents the pdf of an absolutely continuous distribution over $[0,1]$ with bounded differential entropy. Then,
\[
{d}_o(X)=p.
\]
\end{theorem}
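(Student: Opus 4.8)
The plan is to sandwich $d_o(X)$ between matching upper and lower bounds supplied by Theorem \ref{thm:bound-do} specialized to $\ell=1$. That theorem gives
\[
\limsup_{b\to\infty}\frac{H([X_2]_b\mid X_1)}{b}\leq \bar{d}_o(X)\leq \bar{d}_1(X),
\]
together with the analogous inequalities for the lower quantities (recall $X^1=X_1$). Hence it suffices to show that the two outer expressions, the lower end $\limsup_b H([X_2]_b\mid X_1)/b$ and the upper end $\bar{d}_1(X)=\limsup_b H([X_2]_b\mid[X_1]_b)/b$, both equal $p$, and likewise with $\limsup$ replaced by $\liminf$. Since all four sandwiching quantities will converge to $p$, both $\bar{d}_o(X)$ and $\underline{d}_o(X)$ are pinned to $p$, giving $d_o(X)=p$.

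The key device is to expose the latent ``jump'' structure of the process. I would introduce an auxiliary indicator $W\sim\Bern(p)$, independent of $X_1$, and an independent draw $Z\sim f_c$, so that the transition kernel $(1-p)\delta_{x_{t-1}}+pf_c$ is realized by $X_2=X_1$ on $\{W=0\}$ and $X_2=Z$ on $\{W=1\}$. Consequently $[X_2]_b=[X_1]_b$ on $\{W=0\}$ and $[X_2]_b=[Z]_b$ on $\{W=1\}$. Conditioning additionally on $W$ and using that $W$ is independent of the past, I would compute $H([X_2]_b\mid[X_1]_b,W)=p\,H([Z]_b)$ (the $W=0$ branch contributes nothing, since $[X_2]_b$ is then a function of $[X_1]_b$, while on $W=1$ the fresh draw $Z$ is independent of $X_1$). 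This yields the upper bound
\[
H([X_2]_b\mid[X_1]_b)\leq H([X_2]_b,W\mid[X_1]_b)=h_2(p)+p\,H([Z]_b),
\]
where $h_2(p)=-p\log p-(1-p)\log(1-p)$. The same decomposition conditioned on the exact value $X_1$ gives both $p\,H([Z]_b)\leq H([X_2]_b\mid X_1)$ (dropping the nonnegative term $H(W\mid[X_2]_b,X_1)$) and $H([X_2]_b\mid X_1)\leq h_2(p)+p\,H([Z]_b)$; since these bounds are uniform in the conditioning value they survive averaging over the continuous marginal of $X_1$.

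It then remains to identify $\lim_b H([Z]_b)/b$. Since $Z$ is absolutely continuous on $[0,1]$ with bounded differential entropy and $\langle Z\rangle_{2^b}=[Z]_b$, Theorem \ref{thm:0} (equivalently R\'enyi's result via Proposition \ref{prop:1} with $p=1$) gives $H([Z]_b)/b\to 1$; this is exactly where the bounded-differential-entropy hypothesis is used, guaranteeing the ratio has limit one rather than merely being bounded. Dividing the displayed bounds by $b$, sending $b\to\infty$, and noting $h_2(p)/b\to 0$, all four sandwiching quantities converge to $p\cdot 1=p$, so $\bar{d}_o(X)=\underline{d}_o(X)=p$ and the claim follows.

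I expect the main obstacle to be careful bookkeeping rather than anything conceptual. One must verify that the entropy bounds obtained by conditioning on the real-valued $X_1$ are genuinely uniform in $x_1$, so that integrating against the continuous law of $X_1$ is legitimate, and one must confirm that the auxiliary $(W,Z)$ construction faithfully reproduces the prescribed mixture transition, including the measure-zero overlap between the atom at $x_1$ and the continuous component. Controlling the correction term $H(W\mid[X_2]_b,X_1)\in[0,h_2(p)]$ is precisely what forces the upper and lower bounds to collapse to the common value $p$ after normalization by $b$.
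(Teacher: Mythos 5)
Your proof is correct, but it takes a genuinely different route from the paper's. The paper first shows that the quantized process $[X]_b$ is itself a stationary first-order Markov chain (via the jump indicator), which gives $\bar{d}_k(X)=\bar{d}_1(X)$ exactly for every $k$, and then evaluates $H([X_2]_b\mid[X_1]_b=c_1)$ \emph{explicitly}: it applies the mean value theorem to $f_c$ on each quantization cell, expands the resulting entropy sum, and proves convergence of $H([X_2]_b\mid[X_1]_b=c_1)/b$ to $p$ uniformly in $c_1$, using $\sup f_c<\infty$ and the finiteness of $h(f_c)$. You instead avoid all explicit computation with the density: the chain-rule sandwich $p\,H([Z]_b)\le H([X_2]_b\mid\cdot)\le h_2(p)+p\,H([Z]_b)$ (valid both for conditioning on $[X_1]_b$ and on the exact value $X_1$, uniformly in the conditioning value since $W$ and $Z$ are independent of $X_1$) reduces the whole problem to the single fact $H([Z]_b)/b\to 1$, which you import from Theorem \ref{thm:0}/Proposition \ref{prop:1}; the two ends of Theorem \ref{thm:bound-do} with $\ell=1$ then pin both $\bar{d}_o(X)$ and $\underline{d}_o(X)$ to $p$. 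Your argument is shorter, needs no uniform-convergence bookkeeping and no verification that $[X]_b$ is Markov, and uses the regularity hypotheses only through $d(Z)=1$ (indeed Theorem \ref{thm:0} needs only $H(\lfloor Z\rfloor)<\infty$, which is automatic on $[0,1]$, so your route actually weakens the role of the bounded-differential-entropy assumption); the paper's computation is more self-contained and yields the stronger structural facts ($[X]_b$ Markov, $\bar{d}_k=\bar{d}_1$ for all $k\ge 1$) as by-products. Both Theorem \ref{thm:bound-do} and Theorem \ref{thm:0} are established before Theorem \ref{thm:1}, so there is no circularity in your use of them.
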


As another example, Theorem \ref{thm:2} below considers a special  type of auto-regressive Markov processes of order $l$, $l\in\mathds{N}$, and evaluates their information dimension.

\begin{theorem}\label{thm:2}
Consider a stationary Markov process of order $l$ such that conditioned on $X_{t-l}^{t-1}=x_{t-l}^{t-1}$, $X_t$ is distributed as $\sum_{i=1}^la_ix_{t-i}+Z_t$, where $a_i\in(0,1)$, for $i=1,\ldots,l$, and $Z_t$ is an i.i.d.~process distributed according to $(1-p)\delta_0+pf_c$, where $f_c$ is the pdf of an  absolutely continuous distribution. Let $\Zc$ denote the support of $f_c$ and  assume that there  exists $0<\a<\b<\infty$, such that $\a<f_c(z)<\b$, for $z\in\Zc$. Then,
\[
{d}_o(X)=p.
\]
\end{theorem}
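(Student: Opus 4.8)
The plan is to sandwich both information dimensions between $p$ and $p$ by invoking Theorem \ref{thm:bound-do}, which reduces the problem to two computations: a lower bound obtained by conditioning on the \emph{exact} past $X^{\ell}$, and an upper bound obtained by conditioning on the \emph{quantized} past $[X^{\ell}]_b$. The structural fact driving both is that, conditioned on $X^{\ell}=x^{\ell}$, the innovation representation gives $X_{\ell+1}=c(x^{\ell})+Z_{\ell+1}$ with $c(x^{\ell})=\sum_{i=1}^{\ell}a_i x_{\ell+1-i}$ a fixed constant, so the conditional law of $X_{\ell+1}$ is the discrete-continuous mixture $(1-p)\delta_{c}+p\,f_c(\cdot-c)$. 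Throughout I condition on the selector $B=\ind\{Z_{\ell+1}\neq 0\}\sim\Bern(p)$, which is independent of the entire past because $Z_{\ell+1}$ is the fresh innovation.

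For the lower bound, fix $x^{\ell}$, write $c=c(x^{\ell})$, and use $H([X_{\ell+1}]_b\mid X^\ell=x^\ell)\ge H([X_{\ell+1}]_b\mid X^\ell=x^\ell,B)=p\,H([c+W]_b)$, where $W\sim f_c$ and the $B=0$ branch contributes $0$ since then $X_{\ell+1}=c$ is deterministic. It thus suffices to lower bound the quantized entropy of the continuous component: since $f_c\le\b$, every width-$2^{-b}$ cell of $c+W$ has probability at most $\b 2^{-b}$, so $H([c+W]_b)=\E[-\log\P([c+W]_b)]\ge b-\log\b$. Hence $\liminf_b H([X_{\ell+1}]_b\mid X^\ell=x^\ell)/b\ge p$ for every $x^\ell$, and integrating over $x^\ell$ via Fatou's lemma (the integrands being nonnegative) gives $\liminf_b H([X_{\ell+1}]_b\mid X^\ell)/b\ge p$. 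By the lower bounds in Theorem \ref{thm:bound-do}, this yields $\underline{d}_o(X)\ge p$.

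For the upper bound I condition on $[X^\ell]_b$ and write $X_{\ell+1}=\hat S+\eps+Z_{\ell+1}$, where $\hat S=\sum_{i=1}^{\ell}a_i[X_{\ell+1-i}]_b$ is $[X^\ell]_b$-measurable and $\eps=\sum_{i=1}^{\ell}a_i(X_{\ell+1-i}-[X_{\ell+1-i}]_b)\in[0,(\sum_i a_i)2^{-b})$. Decomposing over $B$: when $B=0$ we have $X_{\ell+1}=\hat S+\eps$, confined to an interval of length $<(\sum_i a_i)2^{-b}$, so given $[X^\ell]_b$ the value $[X_{\ell+1}]_b$ ranges over at most $\lfloor\sum_i a_i\rfloor+1=O(1)$ cells and contributes $O(1)$ entropy; when $B=1$ we have $X_{\ell+1}=\hat S+\eps+W$ with $W\sim f_c$ independent of $(\eps,[X^\ell]_b)$, so the conditional law has a density bounded above by $\b$ (the convolution $\eps+W$ inheriting this bound) on a support of finite measure (as $\a\le f_c\le\b$ forces $|\Zc|\le 1/\a$), whence $H([X_{\ell+1}]_b\mid B=1,[X^\ell]_b)\le b+O(1)$. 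Combining, $H([X_{\ell+1}]_b\mid[X^\ell]_b)\le H(B)+(1-p)O(1)+p(b+O(1))=pb+O(1)$, so $\bar{d}_\ell(X)\le p$; the upper bound in Theorem \ref{thm:bound-do} then gives $\bar{d}_o(X)\le\bar{d}_\ell(X)\le p$.

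Putting the two halves together yields $p\le\underline{d}_o(X)\le\bar{d}_o(X)\le p$, hence $d_o(X)=p$. The main obstacle is the upper bound: unlike the exact-past computation, conditioning on the quantized past $[X^\ell]_b$ smears the atom of the innovation across the $O(1)$ cells reachable through the residual $\eps$ rather than pinning it to a single point, and one must verify that the continuous component's conditional density stays bounded (here the independence of $W$ from the past, together with $\eps+W$ inheriting the $\le\b$ bound) so that its quantized entropy is $b+O(1)$ uniformly in $[X^\ell]_b$. A secondary technical point, needed for the $O(1)$ terms to be genuinely finite, is to confirm that $X_{\ell+1}$ has controlled range (equivalently finite $H(\lfloor X_{\ell+1}\rfloor\mid\text{past})$), which follows from the stationarity of the autoregression together with the boundedness hypotheses $\a\le f_c\le\b$ on $\Zc$.
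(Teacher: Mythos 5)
Your proposal is correct and follows essentially the same route as the paper's proof: both reduce via Theorem \ref{thm:bound-do}, decompose on the indicator of whether the fresh innovation $Z_{\ell+1}$ is zero, bound the no-jump branch by confining $\sum_i a_i X_{\ell+1-i}$ to $O(1)$ quantization cells given $[X^\ell]_b$, use the two-sided bound $\a\le f_c\le\b$ to show the continuous branch contributes $b+O(1)$, and invoke Fatou's lemma for the lower bound conditioned on the exact past. The only cosmetic difference is that you bound the continuous branch's cell probabilities directly by $\b 2^{-b}$ and count cells, where the paper runs the same estimate through Bayes' rule and the mean value theorem.
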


Finally, the last result of this section is concerned with moving average processes, when the original process is a sparse one.

\begin{theorem}\label{thm:moving-ave}
Consider an i.i.d.~sparse process $Y$, such that $Y_i\sim pf_c+(1-p)\delta_0$, where $f_c$ denotes an absolutely continuous distribution with bounded support $(0,1)$. Define the causal moving average of process $Y$ as process $X$ defined as $X_i={1\over l}{\sum_{j=1}^l}Y_{i-j}$. Then,
\[
\bar{d}_o(X)\leq p.
\]
\end{theorem}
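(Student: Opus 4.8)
The plan is to bound $\bar{d}_o(X)$ through the joint representation in Lemma~\ref{lemma:eq-rep}. Since $X$ takes values in the bounded set $[0,1)$, Lemmas~\ref{lemma:1} and~\ref{lemma:1-2} guarantee that $\bar{d}_o(X)$ exists and that
\[
\bar{d}_o(X)=\lim_{k\to\infty}{1\over k}\Big(\limsup_{b\to\infty}{H([X^k]_b)\over b}\Big),
\]
so it suffices to show that $\limsup_{b\to\infty} H([X^k]_b)/b \le p(k+l-1)$ for every fixed $k$; dividing by $k$ and letting $k\to\infty$ then gives $\bar{d}_o(X)\le \lim_k p(k+l-1)/k = p$. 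The starting observation is that $X^k=(X_1,\dots,X_k)$ is a \emph{deterministic} function of the finitely many i.i.d.\ variables $Y_{1-l}^{k-1}$; indeed $X^k=\tfrac{1}{l}B\,Y_{1-l}^{k-1}$ for the banded $0/1$ matrix $B\in\{0,1\}^{k\times(k+l-1)}$ encoding the moving average.

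The key device is to condition on the sparsity pattern. Let $S\subseteq\{1-l,\dots,k-1\}$ be the random set of indices at which $Y_i$ is drawn from the continuous part $f_c$; since $f_c$ is supported on $(0,1)$, $S$ is exactly the support of $Y_{1-l}^{k-1}$, and $|S|$ is binomial with parameters $(k+l-1,p)$, so $\E[|S|]=p(k+l-1)$. Writing $H([X^k]_b)\le H(S)+H([X^k]_b\mid S)$ and noting that $H(S)$ does not depend on $b$, it remains to control $H([X^k]_b\mid S)$. Conditioned on $S=s$, the vector $X^k=\tfrac1l B_s Y_s$ is a linear image of the absolutely continuous vector $Y_s=(Y_i)_{i\in s}\in(0,1)^{|s|}$, where $B_s$ is the submatrix of $B$ on the columns indexed by $s$.

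I would then show that, conditioned on $S=s$, the quantized vector contributes at most $|s|$ to the information dimension, i.e.\ $\limsup_b H([X^k]_b\mid S=s)/b\le |s|$. The argument is that $[X^k]_b$ is essentially a function of $[Y_s]_b$: because $B_s$ has nonnegative entries and truncation gives $0\le Y_i-[Y_i]_b<2^{-b}$, and each row of $B$ has at most $l$ ones, the corresponding coordinates of $X^k$ and of $\tfrac1l B_s[Y_s]_b$ differ by at most $2^{-b}$, so each coordinate of $[X^k]_b$ takes at most two values given $[Y_s]_b$. Hence $H([X^k]_b\mid S=s)\le H([Y_s]_b\mid S=s)+k\le |s|b+k$, and dividing by $b$ gives the claimed bound. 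Summing over the finitely many patterns and using $\limsup_b\sum_s\P(S=s)f_s(b)\le\sum_s\P(S=s)\limsup_b f_s(b)$ yields $\limsup_b H([X^k]_b\mid S)/b\le \E[|S|]=p(k+l-1)$, which is exactly what the first paragraph requires.

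The main obstacle is this conditional step: making rigorous that quantizing $X^k$ costs no more than quantizing the $|s|$ continuous coordinates it depends on. The favorable structure---bounded $0/1$ coefficients together with monotone (truncating) quantization---keeps the quantization mismatch to $O(1)$ bits per coordinate, which is precisely what forces the conditional information dimension down to $|s|$ rather than the ambient value $k$. Everything else (the support decomposition, the binomial mean, the interchange of $\limsup_b$ with the finite sum over $s$, and the $b\to\infty$ followed by $k\to\infty$ limits) is routine once this commutation estimate is in hand.
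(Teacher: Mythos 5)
Your proof is correct, but it takes a genuinely different route from the paper's. The paper works directly with the conditional quantities $\bar{d}_k(X)=\limsup_b H([X_k]_b\mid[X^{k-1}]_b)/b$: it introduces the single indicator $Z_{k-1}=\ind_{Y_{k-1}\neq 0}$, and on the event $Z_{k-1}=0$ it splits $H([X_k]_b\mid Z_{k-1}=0,[X^{k-1}]_b)$ into a term conditioned additionally on the initial segment $Y_{-l+1}^{-1}$ and a mutual-information term. The first term is controlled by recursively inverting the moving average to estimate $X_k$ from $([X^{k-1}]_b,Y_{-l+1}^{-1})$, with an error that propagates as $2^{k}l2^{-b}$ and hence costs only $k+\log l$ bits; the second term is bounded by $b\,\P(\text{no all-zero window of length }l\text{ in }Y^k)\le b(1-p)^{k-l}$. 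This yields $\bar{d}_k(X)\le p+(1-p)^{k-l+1}$, and $k\to\infty$ finishes. You instead invoke the joint-entropy characterization of Lemma~\ref{lemma:eq-rep}, condition on the entire support pattern $S$ of $Y_{1-l}^{k-1}$, and observe that given $S=s$ the vector $[X^k]_b$ is determined by $[Y_s]_b$ up to one bit per coordinate (since truncation errors accumulate to at most $2^{-b}$ per $X_i$), giving $\limsup_b H([X^k]_b)/b\le \E[|S|]=p(k+l-1)$ and hence $\bar d_o(X)\le p$ after dividing by $k$. Your route avoids both the exponentially growing error-propagation estimate and the mutual-information/all-zero-window argument, at the cost of an additive $l-1$ overhead that vanishes in the $k\to\infty$ limit; the paper's route stays entirely within the $\bar d_k$ framework and gives the explicit finite-$k$ bound $p+(1-p)^{k-l+1}$. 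Both are sound; yours is arguably the cleaner argument for this particular statement.
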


%-----------%-----------%-----------%-----------%-----------
%-----------%-----------%-----------%-----------%-----------

%--------------------%%--------------------%%--------------------%%--------------------%

\section{Universal CS algorithm}\label{sec:universal-cs}

Consider a stationary process $X=\{X_i\}_{i=1}^{\infty}$, such that $\bar{d}_o(X)< 1$. As we argued in Section \ref{sec:ID-ergodic}, since $\bar{d}_o(X)$ is strictly smaller than one, we expect this process to be  structured. Therefore, intuitively, it might be possible to recover $X_o^n$ generated by source $X$ from an undersampled set of linear measurements $Y_o^m=AX_o^n$, $m<n$. In this section, we explore universal compressed sensing of such processes. We develop algorithms that are able to recover $X_o^n$ from enough  linear measurements,  without having any prior information about the source distribution. The proposed algorithms achieve the optimal performance for stationary memoryless sources with  mixtures of discrete-continuous distributions, and therefore prove that, at least for such memoryless sources, there is no loss in the performance due to universal coding.

 \subsection{Minimum entropy pursuit}\label{sec:mep}

%\subsubsection{Noiseless measurements}

Consider the standard compressed sensing setup:  instead of observing $X_o^n$, the decoder observes  $Y_o^m=AX_o^n$, where $A\in\mathds{R}^{m\times n}$ denotes the linear measurement matrix, and  $m<n$. Further assume that the decoder does not have any knowledge about the distribution of the source.  As we argued, for stationary   processes $\bar{d}_o(X)$ measures the complexity  of the source process. As a reminder,  $\bar{d}_o(X)$ was defined as the limit of  $\bar{d}_k(X)=\limsup_{b}{H([X_{k+1}]_b|[X^{k}]_b)/ b}$. This suggests that, for the right choice of the parameters $b$ and $k$,  $\hat{H}_k([X^n]_b)/b$  defined as $H(U_{k+1}|U^K)$, where $U^{k+1}\sim p_{k+1}(\cdot|[X^n]_b)$, might serve as an estimator of   $\bar{d}_o(X)$. Therefore, inspired by  Occam's razor and this intuition, we propose   \emph{minimum entropy pursuit} (MEP) optimization, which recovers $X_o^n$ by solving the following optimization problem:
\begin{align}
\Xh_o^n=\argmin_{Ax^n=Y_o^m} \hat{H}_k([x^n]_b),\label{eq:MEP-ideal}
\end{align}
where $k$ and $b$ are parameters of the optimization. 

Note that MEP does not require knowledge of  the source  distribution and  hence is a universal compressed sensing recovery algorithm. In the next section, we prove that if the stationary process  $X$ satisfies certain constraints that will be specified later, and the parameters $k$ and $b$ are set appropriately,  then the MEP optimization  can reliably recover $X_o^n$ from measurements $Y^m$, as long as $m> (1+\delta) \bar{d}_o(X)n$, where $\delta>0$ can get arbitrary small.

 \subsection{Theoretical analysis of  MEP}\label{sec:theory-mep}
The main goal of this section  is to show that MEP succeeds in recovering the source vector, without having access to its  distribution. However, to prove this result, we have to impose a constraint on the source distribution. We first review this condition  in Section \ref{eq:mixing}, and then under these constraints, we characterize  the performance of MEP in Section \ref{sec:mep-perf}.  

\subsubsection{Mixing processes}\label{eq:mixing}
To gain some insight on the constraint imposed on the input process, consider a simpler question. Suppose that the decoder has access to both the noiseless measurements $Y^m=AX_o^n$, and the complexity, or more specifically,  the upper ID, of the  process that has generated $X_o^n$. Now given a potential reconstruction sequence $\hat{x}^n$,  is it possible to confirm whether $\xh^n$ is a solution of the MEP optimization? Our heuristic answer to this question, may help the reader understand the constraints studied later.  It is straightforward to check whether $\hat{x}^n$ satisfies the measurement constraints. Moreover, for $\xh^n$ to be a solution of the MEP, in addition to satisfying the measurement equations,   its complexity is expected to be close to the complexity of sequences generated by the source. 
% As a reminder, the  upper ID of stationary process $X$ is defined as
%\[
%\bar{d}_o(X)=\lim_{k\to\infty }\limsup_{b\to\infty} {1\over b}H([X_{k+1}]_b|[X^k]_b).
%\]
For instance, if the decoder has access to  a reliable estimator of  $\bar{d}_o(X)$, it can apply it to  $\xh^n$, and for $n$ large enough, if $\xh^n$ is equal or close to the source vector, it can expect   the output to be close to $\bar{d}_o(X)$. The constraints imposed on process  $X$ enables us to develop such sn estimator of $\bar{d}_o(X)$.

In the rest of the paper, we focus on  $\psi^*$-mixing  stationary  processes. This condition  ensures  the  convergence of our estimate of $\bar{d}_o(X)$. The $\psi^*$-mixing condition  is a standard property studied in the ergodic theory literature. 
%The second condition is related to the $\Psi^*$-mixing condition, but is only defined for Markov processes and  is also  weaker than the classical $\psi^*$-mixing property. If fact, for a Markov process, it is more straightforward to   confirm the second property than the first one. 
Consider a stationary  process $X=\{X_n\}_{-\infty}^\infty$. Let $\mathcal{F}_j^\ell$ denote the $\sigma$-field of events generated by random variables $X_j^k$, where $j\leq k$. Define 
\begin{equation}
\psi^*(g) = \sup \frac{\P(\Ac \cap \Bc)}{\P(\Ac) \P(\Bc)},
\end{equation}
where the supremum  is taken over all events $\Ac \in \mathcal{F}_{-\infty}^{j}$ and $\Bc \in \mathcal{F}_{j+g}^\infty$, where $P(\Ac)>0$ and $\P(\Bc)>0$.
\begin{definition} A  process $X=\{X_n\}_{-\infty}^\infty$ is called $\psi^*$-mixing, if $\psi^*(g) \to 1$, as $g$ grows to infinity.
\end{definition}

Intuitively, this condition ensures that the future and the past of the process that are well-separated are almost independent from each other. (For more information on $\psi^*$-mixing condition, and its connection to other mixing conditions,  the reader is referred to \cite{bradley2005basic}.)
 In the following we review some $\psi^*$-mixing processes.
\begin{example}
Any i.i.d.~process is $\psi^*$-mixing. 
\end{example}

\begin{example}
All aperiodic  Markov chains with finite sate space are  $\psi^*$-mixing \cite{Shields:96}.
\end{example}
\begin{example}
Consider an i.i.d. process $Y=\{Y_i\}$ and define its moving average as $X_i={1\over l}\sum_{j=1}^lY_{i-j}$. Then, process $X$ is $\Psi^*$-mixing. 
\end{example}
\begin{proof}
Note that since process $Y$ is i.i.d.~and since $l$ is finite, for $g$ large enough, $\Fc_{-\infty}^{j}$ and $\Fc_{j+g}^{\infty}$ are independent and therefore $\Psi^*(g)=1.$
\end{proof}
In fact, by the same proof, the averaging function can be replaced by any fixed mapping  $f: \Xc_{-l}^l\to\Xc$ and the process defined as $X_i=f(Y_{i-l}^{i+l})$  is still $\Psi^*$-mixing.

Theorem III.1.7 of \cite{Shields:96} proves that any $\Psi^*$ mixing process with finite alphabet  has exponential rates of convergence  for empirical distributions of all orders. The following theorem presents a straightforward extension of that result to $\Psi^*$-mixing processes with continuous alphabet. It proves that  $b$-bit quantized versions of such processes  have exponential rates for empirical frequencies of all orders, if  $b$ is growing with $n$ slowly enough. 
\begin{theorem}\label{thm:exp-rate-markov}
Consider   $\Psi^*$-mixing process $X=\{X_i\}$, with continuous alphabet $\Xc$. Let process $Z$ denote the  $b$-bit quantized version of process $X$. That is, $Z=\{Z_i\}$,  $Z_i=[X_i]_b$  and $\Zc=\Xc_b$. Then, for any $\e>0$,  there exists $g\in\mathds{N}$,  depending only on $\e$, such that for any  $n>6(k+g)/\e+k$,
\[
\P( \|p_k(\cdot|Z^n)-\mu_k\|_1\geq \e)\leq2^{c\e^2/8} (k+g) n^{|\Zc|^k}2^{-nc\e^2\over 8(k+g)},
\]
where $c=1/(2\ln 2)$. Here, for $a^k\in\Zc^k,$  $\mu_k(a^k)=\P(Z^k=a^k)$
\end{theorem}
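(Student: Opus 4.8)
The plan is to reduce the continuous-alphabet claim to the finite-alphabet statement of Theorem III.1.7 of \cite{Shields:96}, whose entire content is an exponential concentration bound for empirical $k$-block frequencies of exactly the form appearing on the right-hand side. Two facts must be in place before that theorem can be invoked: that $Z$ has a finite alphabet, and that $Z$ inherits the $\psi^*$-mixing property of $X$. The first is immediate once we use that $\Xc$ is bounded (as in Lemma \ref{lemma:1-2}), so the map $x\mapsto [x]_b$ takes only finitely many values and $|\Zc|=|\Xc_b|<\infty$; this is precisely what makes the factors $|\Zc|^k$ and $n^{|\Zc|^k}$ meaningful. Without boundedness the right-hand side would be vacuous, so this assumption is essential rather than cosmetic.

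For the second fact I would argue directly from the definition of $\psi^*(\cdot)$. Since $Z_i=[X_i]_b$ is a per-coordinate deterministic function of $X_i$, the $\sigma$-field $\Fc_j^{\ell}$ generated by $Z_j,\ldots,Z_{\ell}$ is contained in the $\sigma$-field generated by $X_j,\ldots,X_{\ell}$. Consequently the supremum defining $\psi^*_Z(g)$ ranges over a subcollection of the events appearing in $\psi^*_X(g)$, so $\psi^*_Z(g)\le \psi^*_X(g)$ for every $g$. Combined with the trivial lower bound $\psi^*_Z(g)\ge 1$, the hypothesis $\psi^*_X(g)\to 1$ forces $\psi^*_Z(g)\to 1$, so $Z$ is itself $\psi^*$-mixing.

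With these two observations $Z$ is a finite-alphabet $\psi^*$-mixing process and Theorem III.1.7 of \cite{Shields:96} applies, delivering the stated bound together with a choice of $g$ depending only on $\e$. For transparency I would also recall the mechanism behind the explicit constants: having fixed $g$ so that $\psi^*(g)$ is close enough to $1$, one partitions $\{1,\ldots,n\}$ into $k+g$ phase classes according to position modulo $k+g$, so that within each class the length-$k$ blocks are separated by a gap of at least $g$ and are $\psi^*$-almost independent via the product inequality $\P(\Ac_1\cap\cdots\cap\Ac_m)\le (\psi^*(g))^{m-1}\prod_i \P(\Ac_i)$. A Hoeffding-type deviation estimate applied to the $\approx n/(k+g)$ nearly independent blocks in each class yields the exponent $-nc\e^2/(8(k+g))$; the union over the $k+g$ classes contributes the linear factor $(k+g)$; and the union over empirical $k$-types through the method of types contributes the polynomial factor $n^{|\Zc|^k}$. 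The hypothesis $n>6(k+g)/\e+k$ simply guarantees enough blocks per class for these estimates to be non-vacuous. The only place where the argument is more than bookkeeping is the uniform control of the $(\psi^*(g))^{m-1}$ correction in the number of blocks $m$, which is exactly why $g$ must be selected as a function of $\e$ alone and why the residual multiplicative factor comes out as $2^{c\e^2/8}$; once the quantization-and-inheritance reduction is made, no idea beyond the finite-alphabet large-deviation estimates of \cite{Shields:96} is required.
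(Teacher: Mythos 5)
Your proposal is correct and follows essentially the same route as the paper: both reduce to Theorem III.1.7 of \cite{Shields:96} by observing that quantization yields a finite alphabet and that the $\sigma$-fields generated by $Z_j^\ell$ are sub-$\sigma$-fields of those generated by $X_j^\ell$, whence $\psi^*_Z(g)\le\psi^*_X(g)\to 1$. Your added remarks on the lower bound $\psi^*_Z(g)\ge 1$ and on the mechanism behind the explicit constants are sound elaborations of what the paper leaves to the citation.
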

\begin{proof}
The proof is excatly as the proof of Theorem  III.1.7 of \cite{Shields:96}. The shift from continuous-alphabet sources to finite-alphabet sources is done by the quantization of the source and by also noting that if a continuous-alphabet  process is $\Psi^*$-mixing, its quantized version is also $\Psi^*$-mixing. To see this, for $j\leq k$, let $\Fc_{j}^k$ and $\hat{\Fc}_{j}^k$ denote the sigma-fields generated by $X_{j}^k$ and $Z_{j}^k$, respectively. But since $\hat{\Fc}_{j}^k$ is always a sub sigma-field of ${\Fc}_{j}^k$, we have
\begin{align*}
\psi_Z^*(g) &= \sup_{\Ac\in\hat{\Fc}_{-\infty}^j,\Bc\in\hat{\Fc}_{j+g}^{\infty}} \frac{\P(\Ac \cap \Bc)}{\P(\Ac) \P(\Bc)}\\
&\leq \sup_{\Ac\in{\Fc}_{-\infty}^j,\Bc\in{\Fc}_{j+g}^{\infty}} \frac{\P(\Ac \cap \Bc)}{\P(\Ac) \P(\Bc)}\\
& =\Psi_X^*(g).
\end{align*}
But since the continuous process is known to be $\Psi^*$-mixing, $\Psi_X^*(g)$ converges to one, as $g$ grows to infinity. This proves that process $Z$ is also $\Psi^*$-mixing, with a $\Psi^*(g)$ function than is upper-bounded with that of $\Psi_X^*(g)$.
\end{proof}

\subsubsection{Performance of MEP}\label{sec:mep-perf}
The following theorem proves that MEP is a universal decoder for $\Psi^*$-mixing processes. 
%This theorem proves that in the asymptotic setting, the normalized number of measurements ($m/n$) required by  MEP  for recovering memoryless sources with discrete-continuous mixture distributions coincides with the fundamental limits characterized in \cite{WuV:10}. In other words,  this result shows that, at least for such sources,  there is no loss in  performance due to universality.
\begin{theorem}\label{thm:3}
Consider a $\Psi^*$-mixing  stationary  process $\{X_i\}_{i=1}^{\infty}$, with $\Xc=[0,1]$ and  upper information dimension $\bar{d}_o(X)$. Let $b=b_n=\lceil \log\log n\rceil $, $k=k_n=o({\log n\over \log \log n})$ and $m=m_n\geq (1+\d)\bar{d}_o(X)n$, where $\d>0$. For each $n$, let the entries of the measurement matrix  $A=A_n\in\mathds{R}^{m\times n}$ be drawn  i.i.d.~according to $\Nc(0,1)$. For $X_o^n$ generated by the source $X$ and $Y_o^{m}=AX_o^n$, let $\Xh_o^n=\Xh^n_{o}(Y_o^{m},A)$ denote the solution of \eqref{eq:MEP-ideal}, \ie $\Xh_o^n=\argmin_{Ax^n=Y_o^{m}} \hat{H}_{k}([x^n]_{b})$. Then,
\[
{1\over \sqrt{n}}\|X_o^n-\Xh_o^n\|_2\stackrel{\rm P}{\longrightarrow} 0.
\]
\end{theorem}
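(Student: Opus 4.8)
The plan is to compare the MEP cost of the true signal with that of any competing feasible sequence, and then to use a Gaussian small-ball estimate to show that any feasible sequence whose quantization has comparably small conditional empirical entropy must be $\ell_2$-close to $X_o^n$. Throughout I work with the $b$-bit quantizations $u^n\triangleq[X_o^n]_b$ and $v^n\triangleq[\Xh_o^n]_b$. Since $\|x^n-[x^n]_b\|_2\le\sqrt{n}\,2^{-b}=\sqrt{n}/\log n$, the quantization error is $o(\sqrt{n})$, so it suffices to prove that $\frac{1}{\sqrt{n}}\|u^n-v^n\|_2\stackrel{\rm P}{\to}0$.

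First I would control the cost of the true signal. Applying Theorem \ref{thm:exp-rate-markov} to the quantized process $Z=[X]_b$ shows that $p_{k+1}(\cdot\,|\,u^n)$ concentrates in $\ell_1$ around the true law $\mu_{k+1}$ of $[X^{k+1}]_b$, and the associated failure probability still vanishes under the prescribed growth $b=\lceil\log\log n\rceil$, $k=o(\log n/\log\log n)$ (its exponent $nc\e^2/(8(k+g))=n^{1-o(1)}$ dominates the factor $n^{|\Xc_b|^k}$, whose logarithm is only $n^{o(1)}$). Feeding this into the continuity of entropy on the finite alphabet $\Xc_b^{k+1}$, together with $\limsup_b H([X_{k+1}]_b|[X^k]_b)/b=\bar d_k(X)$ and $\bar d_k(X)\downarrow\bar d_o(X)$ (Lemma \ref{lemma:1}), I would conclude that for any fixed $\e_1>0$ one has $\hat H_k(u^n)\le b(\bar d_o(X)+\e_1)$ with probability tending to one. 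By optimality of the MEP solution and feasibility of $X_o^n$, the same bound passes to $v^n$, so both $u^n$ and $v^n$ lie in $\mathcal F\triangleq\{w^n\in\Xc_b^n:\hat H_k(w^n)\le b(\bar d_o(X)+\e_1)\}$.

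Next I would bound $|\mathcal F|$ and invoke the small-ball estimate. By the method of conditional types, $\log_2|\mathcal F|\le nb(\bar d_o(X)+\e_1)+n^{o(1)}$, since each type class of conditional entropy at most $b(\bar d_o(X)+\e_1)$ has at most $2^{nb(\bar d_o(X)+\e_1)}$ members and the number of $(k+1)$-types is $(n+1)^{|\Xc_b|^{k+1}}=2^{(\log n)^{k+2}}=2^{n^{o(1)}}$. On the high-probability event $\|A\|\le 2\sqrt{n}$, feasibility $A\Xh_o^n=AX_o^n$ forces $\|A(v^n-u^n)\|_2\le 2\|A\|\sqrt{n}\,2^{-b}\le 4n/\log n=:\tau_0$. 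For a fixed $w^n$ with $\|w^n\|_2\ge\eta\sqrt{n}$, the vector $Aw^n$ is $\Nc(0,\|w^n\|_2^2 I_m)$, so $\|Aw^n\|_2^2\sim\|w^n\|_2^2\,\chi^2_m$, and the Gaussian small-ball bound gives $\P(\|Aw^n\|_2\le\tau_0)\le(\beta e)^{m/2}$ with $\beta=\tau_0^2/(\|w^n\|_2^2\,m)=O(1/\log^2 n)$; hence $\log_2\P(\|Aw^n\|_2\le\tau_0)\le -m\log_2\log n+O(m)$.

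The decisive step is the union bound, and here the key observation is that once the source realization is revealed $u^n=[X_o^n]_b$ is fixed while $A$ remains independent of it, so I only need to union bound over the single unknown $v^n\in\mathcal F$ rather than over pairs. The exponent is therefore $\log_2|\mathcal F|-m\log_2\log n+O(m)\le b\,[\,n(\bar d_o(X)+\e_1)-m\,]+O(m)$, using $b\approx\log_2\log n$. Choosing $\e_1<\delta\bar d_o(X)$ and $m\ge(1+\delta)\bar d_o(X)n$ makes the bracket at most $-n(\delta\bar d_o(X)-\e_1)<0$, so the exponent tends to $-\infty$; with probability tending to one no $v^n\in\mathcal F$ satisfies both $\|v^n-u^n\|_2\ge\eta\sqrt{n}$ and $\|A(v^n-u^n)\|_2\le\tau_0$. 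Combined with the earlier steps this yields $\frac{1}{\sqrt{n}}\|X_o^n-\Xh_o^n\|_2<\eta+o(1)$ with high probability, and since $\eta>0$ is arbitrary the claimed convergence in probability follows. I expect the main obstacle to be exactly this balancing act: both $\log_2|\mathcal F|$ and the small-ball exponent carry a factor $\log\log n$ (from $b$, and from $\tau_0$ being only polynomially small), so the argument closes only because conditioning on the source removes the squared count and because the measurement rate strictly exceeds $\bar d_o(X)$; a secondary difficulty is making the entropy-concentration of the second paragraph uniform over the growing alphabet $\Xc_b^{k+1}$.
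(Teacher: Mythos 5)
Your proof has the same overall architecture as the paper's: reduce to the $b$-bit quantizations, show $\hat H_k([X_o^n]_b)\le b(\bar d_o(X)+\e_1)$ with high probability via the $\Psi^*$-mixing concentration, pass the bound to $[\Xh_o^n]_b$ by optimality, count the resulting low-entropy candidate set, and kill every far-away candidate with the $\chi^2$ lower-tail bound plus a union bound taken conditionally on $X_o^n$ (the paper does exactly this conditioning, via Fubini and Borel--Cantelli). The quantitative balance you identify --- a factor $\log\log n$ on both sides of the exponent, closed only because $m$ strictly exceeds $\bar d_o(X)n$ --- is precisely how the paper's choice $\tau=1-(\log n)^{-2/(1+\upsilon)}$ works. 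The one genuinely different ingredient is the counting step: you bound $|\mathcal F|$ by the method of conditional (Markov) types, whereas the paper bounds it through Ziv's inequality, showing every sequence with $\hat H_k\le b(\bar d_o+\e_1)$ has Lempel--Ziv codelength at most $nb(\bar d_o+\e_1)+o(nb)$ (Appendix~\ref{app:D}) and then using unique decodability to count. Both give $\log_2|\mathcal F|\le nb(\bar d_o+\e_1)+n^{o(1)}$; your route needs the same bookkeeping for the growing alphabet $\Xc_b$ that the paper's appendix does for the LZ constants, but is otherwise a legitimate substitute.

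The one concrete gap is the step you yourself flag as a ``secondary difficulty'': you apply Theorem~\ref{thm:exp-rate-markov} and the entropy-continuity bound directly at the growing order $k=k_n$. Lemma~\ref{lemma:bd-TV-dist} over the alphabet $\Xc_b^{k+1}$ contributes an error $\e\log|\Xc_b^{k+1}|/b=\e(k+1)$ after normalizing by $b$, which does not vanish for fixed $\e$ and $k_n\to\infty$; forcing $\e=\e_n\to0$ does not obviously help because the gap $g$ in Theorem~\ref{thm:exp-rate-markov} depends on $\e$ with no controlled rate. The paper resolves exactly this by exploiting that $\hat H_k(x^n)$ is nonincreasing in $k$: it fixes a finite $k_{\d_1}$ with $\bar d_{k_{\d_1}}(X)\le\bar d_o(X)+\d_1$, bounds $\hat H_{k_n}([X_o^n]_{b_n})\le\hat H_{k_{\d_1}}([X_o^n]_{b_n})$, and runs the concentration and continuity arguments only at that \emph{fixed} order with a \emph{fixed} $\e_1$, letting only $b$ and $n$ grow. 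With that substitution your argument closes; without it the second paragraph is not complete as written.
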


\begin{remark}
Theorem \ref{thm:3} proves that, in the asymptotic setting, as the blocklength $n$ grows to infinity, the normalized number of measurements ($m/n$) required by  MEP  for recovering memoryless sources with discrete-continuous mixture distributions coincides with  the fundamental limits of  non-universal compressed sensing characterized in \cite{WuV:10}. In other words,  this result shows that, at least for such sources,  there is no loss in  performance due to universality.  This proves that, asymptotically, at least for such stationary  memoryless sources, similar to data compression, denoising, and prediction, there is no loss in universal compressed sensing, due to not knowing the source distribution.
\end{remark}

The optimization presented in \eqref{eq:MEP-ideal} is not easy to handle. While the search domain, \ie the set of points satisfying $Ax^n=y_o^m$, is a hyperplane, the cost function is defined on a discretized space, which is formed by the quantized version of the source alphabet $\Xc$. To move towards  designing an implementable universal compressed sensing algorithm, consider the following Lagrangian-type approximation of MEP:
 \begin{align}
\xh_o^n=\argmin_{u^n\in\Xc_b^n} \Big(\hat{H}_k(u^n)+{\lambda\over n^2} \|Au^n-y_o^m\|_2^2\Big),\label{eq:MEP-lagrangian}
\end{align}
where $\Xc_b\triangleq \{[x]_b:\;x\in\Xc\}$.
We refer to this algorithm as Lagrangian-MEP. The main difference between \eqref{eq:MEP-ideal} and \eqref{eq:MEP-lagrangian} is that in \eqref{eq:MEP-lagrangian} the search space is now a discrete set. The advantage of  Lagrangian-MEP compared to MEP is  that it is implementable and  classic discrete optimization methods such as Markov chain Monte Carlo (MCMC)  and simulated annealing  \cite{Gibbs_sampler,simulated_annealing_1,simulated_annealing_2} can be employed to approximate its optimizer.

The Lagrangian-MEP algorithm  is  in fact identical to the heuristic algorithm for universal compressed sensing proposed in \cite{BaronD:11} and \cite{BaronD:12}.   In \cite{BaronD:11} and \cite{BaronD:12}, Baron et al.~ employ simulated annealing and Markov chain Monte Carlo techniques to approximate  the minimizer of the Lagrangian-MEP cost function.    The following theorem shows that for the right choice of parameter $\lambda$, \eqref{eq:MEP-lagrangian} is in fact a universal compressed sensing algorithm, which approximates the solution of MEP with no asymptotic loss in    performance.

\begin{theorem}\label{thm:4}
Consider a  $\Psi^*$-mixing  stationary   process $\{X_i\}_{i=1}^{\infty}$, with $\Xc=[0,1]$ and  upper information dimension $\bar{d}_o(X)$. Let $b=b_n=\lceil r\log\log n\rceil $, where $r>1$, $k=k_n=o({\log n\over \log \log n})$, $\l=\l_n=(\log n)^{2r}$ and $m=m_n\geq (1+\d)\bar{d}_o(X)n$, where $\d>0$. For each $n$, let the entries of the measurement matrix  $A=A_n\in\mathds{R}^{m\times n}$ be drawn i.i.d.~according to $\Nc(0,1)$. Given $X_o^n$ generated by source $X$ and $Y_o^{m}=AX_o^n$, let $\Xh_o^n=\Xh^n_{o}(Y_o^{m},A)$ denote the solution of \eqref{eq:MEP-lagrangian}, \ie $\Xh_o^n=\argmin_{u^n\in\Xc^n} \;(\hat{H}_{k}(u^n)+{\lambda \over n^2} \|Au^n-Y_o^m\|_2^2)$. Then,
\[
{1\over \sqrt{n}}\|X_o^n-\Xh_o^n\|_2\stackrel{\rm P}{\longrightarrow} 0.
\]
\end{theorem}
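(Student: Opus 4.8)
The plan is to mimic the proof of Theorem \ref{thm:3}, replacing the hard measurement constraint by the quadratic penalty and carefully controlling the penalty budget that the minimizer can afford. Write the Lagrangian cost as $J(u^n)=\hat{H}_k(u^n)+(\lambda/n^2)\|Au^n-Y_o^m\|_2^2$. The first step is a \emph{reference comparison}: since $\Xh_o^n$ minimizes $J$ over $\Xc_b^n$ and $[X_o^n]_b\in\Xc_b^n$, we have $J(\Xh_o^n)\le J([X_o^n]_b)$. Because $Y_o^m=AX_o^n$ and $\|[X_o^n]_b-X_o^n\|_2\le\sqrt{n}\,2^{-b}$, the penalty at the reference is $(\lambda/n^2)\|A([X_o^n]_b-X_o^n)\|_2^2$, which concentrates around $(\lambda/n^2)\,m\,n\,2^{-2b}=(\lambda 2^{-2b})\,m/n$; the matched choice $\lambda=(\log n)^{2r}$, $b=\lceil r\log\log n\rceil$ makes $\lambda 2^{-2b}=O(1)$, so the reference penalty is $O(m/n)=O(1)$. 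Meanwhile $\hat{H}_k([X_o^n]_b)$ is the $k$-th order conditional empirical entropy of the $\Psi^*$-mixing quantized process $[X]_b$; invoking the exponential empirical-frequency rates of Theorem \ref{thm:exp-rate-markov} (sharp enough to survive the joint scaling of $n,b,k$) together with the definitions $\bar{d}_k=\limsup_b H([X_{k+1}]_b|[X^k]_b)/b$ and $\bar{d}_o=\lim_k\bar{d}_k$ gives $\hat{H}_k([X_o^n]_b)\le(\bar{d}_o+o(1))\,b$ with high probability. Hence $J(\Xh_o^n)\le(\bar{d}_o+o(1))\,b+O(1)$.

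From this single inequality I would extract the two facts I need about the minimizer. Since both summands of $J$ are nonnegative, $\hat{H}_k(\Xh_o^n)\le(\bar{d}_o+o(1))\,b$, i.e.\ the minimizer has near-optimal empirical entropy, and $(\lambda/n^2)\|A\Xh_o^n-Y_o^m\|_2^2\le(\bar{d}_o+o(1))\,b+O(1)=O(\log\log n)$, i.e.\ the minimizer is approximately feasible. Writing $A\Xh_o^n-Y_o^m=A(\Xh_o^n-X_o^n)$, the goal reduces to showing that no $u^n\in\Xc_b^n$ that is simultaneously (i) far from the source, $\tfrac1{\sqrt n}\|u^n-X_o^n\|_2\ge\gamma$, and (ii) of near-optimal entropy, $\hat{H}_k(u^n)\le(\bar{d}_o+o(1))b$, can have residual $\|A(u^n-X_o^n)\|_2^2\le(n^2/\lambda)\,O(\log\log n)$, for any fixed $\gamma>0$.

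This last step, a union bound over the discrete candidate set, is where I expect the real work. I would stratify the candidates by their empirical-entropy level $\alpha b$, $0\le\alpha\le\bar{d}_o+o(1)$. A method-of-types count, whose type-overhead $(n+1)^{2^{b(k+1)}}=2^{n^{o(1)}}$ is subexponential precisely because $b=O(\log\log n)$ and $k=o(\log n/\log\log n)$, bounds the number of such sequences by $2^{n\alpha b+n^{o(1)}}$. For a \emph{fixed} candidate at distance $\ge\gamma\sqrt n$, the ratio $\|A(u^n-X_o^n)\|_2^2/\|u^n-X_o^n\|_2^2$ is a $\chi^2_m$ variable, and the chi-squared lower-tail estimate gives $\P(\|A(u^n-X_o^n)\|_2^2\le s)\le 2^{-\Theta(m r\log\log n)}$ once the residual budget $s=(n^2/\lambda)O(\log\log n)$ is inserted. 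Multiplying the count by this probability yields, per stratum, roughly $2^{\,n r\log\log n\,(\alpha-(1+\delta)\bar{d}_o)}$, and the crucial point is that the oversampling $m\ge(1+\delta)\bar{d}_o n$ forces $\alpha\le\bar{d}_o(1+o(1))<(1+\delta)\bar{d}_o$, making each exponent diverge to $-\infty$. Summing over the polynomially many strata keeps the union bound vanishing, so with probability tending to one the minimizer lies within $\gamma\sqrt n$ of $X_o^n$; since $\gamma$ is arbitrary, $\tfrac1{\sqrt n}\|X_o^n-\Xh_o^n\|_2\to 0$ in probability. The main obstacle is exactly this balance: the candidate set has size $2^{\Theta(nb)}$ with $b\to\infty$, so the Gaussian concentration must be sharp enough to beat an entropy that itself grows with $n$; the $(1+\delta)$ slack in the number of measurements together with the matched growth of $b$ and $\lambda$ are what make the two exponential rates cancel with room to spare.
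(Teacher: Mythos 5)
Your proposal is correct and follows essentially the same route as the paper's proof: the reference comparison $J(\Xh_o^n)\le J([X_o^n]_b)$, the check that the matched choices $\lambda=(\log n)^{2r}$, $b=\lceil r\log\log n\rceil$ make the reference penalty negligible relative to $b$, the extraction of the two consequences (near-optimal conditional empirical entropy and approximate feasibility of the minimizer), and a union bound pitting a count of low-entropy candidates against the $\chi^2$ lower tail and the oversampling $m\ge(1+\delta)\bar{d}_o(X)n$, finishing with the triangle inequality on the residual. The only substantive deviation is the counting step: the paper bounds the number of candidates by passing through the Lempel--Ziv codelength (the bound $\ell_{\rm LZ}([x^n]_b)\le n\hat{H}_k([x^n]_b)+o(nb)$ of Appendix \ref{app:D} plus unique decodability of the LZ code), whereas you use a direct method-of-types count stratified by entropy level; both yield $2^{nb(\bar{d}_o(X)+o(1))}$ candidates and the same final exponent, so this is an equivalent substitution rather than a genuinely different argument.
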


%------------------%------------------%------------------%------------------

%\subsubsection{Robustness to noise}

  So far we assumed that the measurements are perfect and noise-free. In almost all practical situations the measurements are contaminated by noise. Therefore, it is important to study the performance of the proposed algorithms in the presence of noise. We next prove that the Lagrangian-MEP algorithm, which was proved to be  an implementable universal compressed sensing algorithm, is also robust to measurement noise.

 Assume that  instead of $Ax_o^n$, the decoder observes  $y_o^m=Ax_o^n+z^m$, where $z^m$ denotes the noise in the measurement system, and   employs the Lagrangian-MEP to recover $x_o^n$, \ie
\[
\xh^n=\argmin_{ u^n\in\Xc_b^n}\Big(\hat{H}_k(u^n)+{\lambda\over n^n}\|Au^n-y_o^m\|_2\Big).
\]

The following theorem proves that the Lagrangian-MEP is robust to measurement noise, and as long as the $\ell_2$ norm of the noise vector is small enough, the algorithm recovers the source vector from the same number of measurements, despite receiving noisy observations.

\begin{theorem}\label{thm:5}
Consider a  $\Psi^*$-mixing  stationary  process $\{X_i\}_{i=1}^{\infty}$, with $\Xc=[0,1]$ and  upper information dimension $\bar{d}_o(X)$. Consider a measurement matrix  $A=A_n\in\mathds{R}^{m\times n}$ with i.i.d.~entries distributed according to $\Nc(0,1)$. Let $b=b_n=\lceil r\log\log n\rceil $, where $r>1$, $k=k_n=o({\log n\over \log \log n})$, $\l=\l_n=(\log n)^{2r}$ and $m=m_n\geq (1+\d)\bar{d}_o(X)n$, where $\d>0$.  For $X_o^n$ generated by the source $X$, we observe $Y_o^{m}=AX_o^n+Z^m$, where $Z^m$ denotes the measurement noise. Assume that there exists a deterministic sequence $c_m$ such that  $\lim\limits_{m\to\infty}\P(\|Z^m\|_2>c_m)=0$, and  $c_m=O({m/ (\log m)^r})$.
Let $\Xh_o^n=\Xh^n_{o}(Y_o^{m},A)$ denote the solution of \eqref{eq:MEP-lagrangian}. Then, ${1\over \sqrt{n}}\|X_o^n-\Xh_o^n\|_2\stackrel{\rm P}{\longrightarrow}0.$
\end{theorem}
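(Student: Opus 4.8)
The plan is to reduce this to the noiseless analysis of Theorem~\ref{thm:4}, treating the measurement noise $Z^m$ as a perturbation that enters the estimates only at lower order. The starting point is the optimality of $\Xh_o^n$: since $[X_o^n]_b\in\Xc_b^n$ is feasible,
\begin{align}
\hat{H}_k(\Xh_o^n)+\frac{\lambda}{n^2}\|A\Xh_o^n-Y_o^m\|_2^2 \le \hat{H}_k([X_o^n]_b)+\frac{\lambda}{n^2}\|A[X_o^n]_b-Y_o^m\|_2^2. \label{eq:plan-opt}
\end{align}
I would first bound the right-hand side. For the entropy term, the $\Psi^*$-mixing hypothesis together with Theorem~\ref{thm:exp-rate-markov} guarantees that $p_{k+1}(\cdot\,|[X_o^n]_b)$ concentrates around $\mu_{k+1}$, so that $\hat{H}_k([X_o^n]_b)$ is, with high probability, at most $(\bar{d}_k(X)+o(1))\,b$. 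For the penalty term, I would write $A[X_o^n]_b-Y_o^m = A([X_o^n]_b-X_o^n)-Z^m$ and use the triangle inequality. The quantization error obeys $\|[X_o^n]_b-X_o^n\|_2\le \sqrt{n}\,2^{-b}\le \sqrt{n}(\log n)^{-r}$, which after multiplication by $\|A\|_{\mathrm{op}}=O(\sqrt{n})$ (valid with high probability for a Gaussian matrix) gives $O(n(\log n)^{-r})$; the noise obeys $\|Z^m\|_2\le c_m=O(m/(\log m)^r)=O(n(\log n)^{-r})$ with probability tending to one. Hence $\|A[X_o^n]_b-Y_o^m\|_2=O(n(\log n)^{-r})$, and since $\lambda=(\log n)^{2r}$ the whole penalty term is $O(1)$. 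This is precisely where the hypothesis $c_m=O(m/(\log m)^r)$ is used: it matches the scaling of $\lambda$ so that the noise contributes the same $O(1)$ order as the quantization error and nothing more.

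Feeding these bounds into \eqref{eq:plan-opt} and using that both terms on the left are non-negative, I obtain two consequences for the minimizer. First, $\Xh_o^n$ is of low complexity, $\hat{H}_k(\Xh_o^n)\le (\bar{d}_k(X)+o(1))b+O(1)$. Second, $\frac{\lambda}{n^2}\|A\Xh_o^n-Y_o^m\|_2^2=O(b)$, so that $\|A\Xh_o^n-Y_o^m\|_2=O(n\sqrt{b}/(\log n)^r)$. A final triangle inequality with $\|Z^m\|_2\le c_m$ then yields the crucial estimate $\|A(\Xh_o^n-X_o^n)\|_2=O(n\sqrt{b}/(\log n)^r)=o(n)$.

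It remains to convert this small measurement residual into a small reconstruction error, which is the compressed-sensing core shared with Theorem~\ref{thm:4}. I would argue that, with probability tending to one over $A$, no sequence $u^n\in\Xc_b^n$ can simultaneously satisfy the low-complexity bound, have residual $\|A(u^n-X_o^n)\|_2=O(n\sqrt{b}/(\log n)^r)$, and obey $\frac{1}{\sqrt{n}}\|u^n-X_o^n\|_2>\eta$ for a fixed $\eta>0$; since $\Xh_o^n$ meets the first two conditions, it must violate the third. The number of candidate low-complexity sequences is controlled by a method-of-types argument: because $b(k+1)=o(\log n)$, the number of $(k{+}1)$-th order types over $\Xc_b$ is subexponential, $2^{o(n)}$, and the conditional type classes have size at most $2^{n\hat{H}_k}$, so the count is $2^{\,n\bar{d}_k(X)b(1+o(1))}$. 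For a fixed such $u^n$, Gaussian concentration of $\|A(u^n-[X_o^n]_b)\|_2^2/\|u^n-[X_o^n]_b\|_2^2\sim\chi^2_m$ in its lower tail, with deviation parameter $s=\Theta(b/(\log n)^{2r})$, gives a collision probability of order $2^{-(1+o(1))\,r m\log\log n}$.

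The main obstacle is the delicacy of the resulting union bound. Both the cardinality $2^{\,n\bar{d}_k(X)b}$ of the low-complexity set and the collision exponent $rm\log\log n$ carry the diverging quantization-depth factor $b\asymp r\log\log n$, so a crude count does not suffice; the two exponents must be balanced so that the union bound succeeds precisely when $rm\log\log n> r\bar{d}_k(X)n\log\log n$, that is, when $m/n>\bar{d}_k(X)$. This is exactly where all the parameter choices are consumed at once: $\lambda=(\log n)^{2r}$ and $b=\lceil r\log\log n\rceil$ fix the size of $s$ and hence the collision exponent, while $k=k_n\to\infty$ forces $\bar{d}_k(X)\downarrow\bar{d}_o(X)$ by Lemma~\ref{lemma:1}, so that the standing assumption $m/n\ge(1+\delta)\bar{d}_o(X)$ eventually dominates $\bar{d}_k(X)$. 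Establishing this balance rigorously, rather than the essentially routine treatment of the noise term, is the heart of the argument.
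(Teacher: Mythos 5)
Your proposal is correct and follows the same overall strategy as the paper's proof: start from the optimality of $\Xh_o^n$ in \eqref{eq:MEP-lagrangian}, bound the right-hand side by combining the mixing-based concentration of $\hat{H}_k([X_o^n]_b)$ with an operator-norm bound on $A q_o^n$ and the assumed bound $\|Z^m\|_2\le c_m$ (the paper expands the square $\|Aq_o^n+Z^m\|_2^2$ and checks in \eqref{eq:bd-1}--\eqref{eq:bd-3} that each term divided by $b$ vanishes, which is exactly your observation that $c_m=O(m/(\log m)^r)$ and $\lambda=(\log n)^{2r}$ conspire to make the penalty $O(1)=o(b)$), deduce that the minimizer is simultaneously low-complexity and has small residual, and finish with a union bound over low-complexity sequences using the lower tail of the $\chi^2$ concentration in Lemma~\ref{lemma:chi}; the paper then literally defers to the proof of Theorem~\ref{thm:4} for this last stage. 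The one genuine difference is how you count the low-complexity set: you use a method-of-types argument (subexponentially many $(k{+}1)$-th order types because $b(k+1)=o(\log n)$, each conditional type class of size at most $2^{n\hat H_k}$), whereas the paper routes through Lempel--Ziv, showing in Appendix~\ref{app:D} that $\ell_{\rm LZ}([x^n]_b)\le n\hat H_k([x^n]_b)+o(nb)$ uniformly and then using unique decodability to get $|\Cc_n|\le 2^{nb(\bar d_o+3\e)+1}$. Both give the same exponent $nb\,\bar d_o(1+o(1))$, so the balance against the collision exponent $\Theta(r m\log\log n)$ works out identically; the LZ route has the advantage that the appendix already tracks the constants for an alphabet of size $2^{b_n}$ growing with $n$, while your type-counting route would need the analogous (standard but slightly delicate, given the circular convention and growing order and alphabet) bound on the size of a Markov type class. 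Two small points you leave implicit that the paper handles explicitly: the decoupling of the randomness of $A$ from that of $X_o^n$ in the union bound (Fubini plus Markov plus Borel--Cantelli to get $\P_{X_o^n}(\Ec_3^c)\to 0$), and the final quantitative step converting $\|A(\Xh_o^n-[X_o^n]_b)\|_2\ge\sqrt{(1-\tau)m}\,\|\Xh_o^n-[X_o^n]_b\|_2$ with $1-\tau=(\log n)^{-2r/(1+f)}$ into $n^{-1/2}\|\Xh_o^n-[X_o^n]_b\|_2=O(\sqrt{\log\log n}\,(\log n)^{-rf/(1+f)})\to 0$; neither is a gap in your approach.
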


%\begin{appendix}
%Let $X$ denote a Markov process of order $r$. Define process $Y$ such that $Y_i=X_{i-r+1}^i$. By this definition, $Y$ is a first order Markov process, because for any $x_{-\infty}^i$ and $y_j=x_{j-r+1}^{j}$, $j\leq i$, we have
%\begin{align*}
%\P(Y_i=y_i|Y_{-\infty}^{i-1}=y_{-\infty}^{i-1})&=\P(X_i=x_i|X_{-\infty}^{i-1}=x_{-\infty}^{i-1})\\
%&=\P(X_i=x_i|X_{i-r}^{i-1}=x_{i-r}^{i-1}),
%\end{align*}
%and
%\begin{align*}
%\P(Y_i=y_i|Y_{i-1}=y_{i-1})&=\P(X_{i-r+1}^i=x_{i-r+1}^i|X_{i-r}^{i-1}=x_{i-r}^{i-1})\\
%&=\P(X_i=x_i|X_{i-r}^{i-1}=x_{i-r}^{i-1}).
%\end{align*}
%
%\end{appendix}

%---------------------------%---------------------------%---------------------------

\section{Proofs}\label{sec:proof}

Before presenting the proofs of the results, we state two useful lemmas that are used later in the proofs.  The first  lemma in the following is from \cite{JalaliM:14}.

\begin{lemma}[$\chi^2$ concentration]\label{lemma:chi}
Fix $\tau>0$, and let $U_i\stackrel{\rm i.i.d.}{\sim}\Nc(0,1)$, $i=1,2,\ldots,m$. Then,
\begin{align*}
\P\Big( \sum_{i=1}^m  U_i^2 <m(1- \tau) \Big)  \leq {\rm e} ^{\frac{m}{2}(\tau + \ln(1- \tau))}
\end{align*}
and
\begin{align}\label{eq:chisq}
\P\Big( \sum_{i=1}^m  U_i^2 > m(1+\tau) \Big)  \leq {\rm e} ^{-\frac{m}{2}(\tau - \ln(1+ \tau))}.
\end{align}
\end{lemma}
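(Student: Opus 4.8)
The plan is to prove both inequalities by the Cramér--Chernoff (exponential Markov) method, treating the upper and lower tails symmetrically. Write $S_m=\sum_{i=1}^m U_i^2$. The starting observation is that since the $U_i$ are \iid $\Nc(0,1)$, each $U_i^2$ has the moment generating function $\E[{\rm e}^{sU_i^2}]=(1-2s)^{-1/2}$ for $s<1/2$, and hence by independence $\E[{\rm e}^{sS_m}]=(1-2s)^{-m/2}$ on that same range. This single computation drives both bounds; the rest is a one-dimensional optimization of the free Chernoff parameter.

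For the upper tail, I would fix $s\in(0,1/2)$ and apply Markov's inequality to ${\rm e}^{sS_m}$, obtaining $\P(S_m>m(1+\tau))\leq {\rm e}^{-sm(1+\tau)}(1-2s)^{-m/2}$. Taking the logarithm gives the exponent $-sm(1+\tau)-\tfrac{m}{2}\ln(1-2s)$, which I would minimize over $s$. Differentiating and setting the derivative to zero yields $1-2s=1/(1+\tau)$, i.e.\ the optimizer $s^*=\tau/(2(1+\tau))$, which indeed lies in $(0,1/2)$ for every $\tau>0$. Substituting $s^*$ back, the linear term contributes $-m\tau/2$ and the logarithmic term contributes $\tfrac{m}{2}\ln(1+\tau)$, so the exponent collapses to $-\tfrac{m}{2}(\tau-\ln(1+\tau))$, which is exactly \eqref{eq:chisq}.

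For the lower tail I would run the same argument on $-S_m$: for $s>0$, Markov's inequality applied to ${\rm e}^{-sS_m}$ gives $\P(S_m<m(1-\tau))\leq {\rm e}^{sm(1-\tau)}(1+2s)^{-m/2}$, where now the transform $\E[{\rm e}^{-sS_m}]=(1+2s)^{-m/2}$ is valid for all $s>0$. Optimizing the exponent $sm(1-\tau)-\tfrac{m}{2}\ln(1+2s)$ gives $s^*=\tau/(2(1-\tau))$, positive for $\tau\in(0,1)$, and substituting yields the exponent $\tfrac{m}{2}(\tau+\ln(1-\tau))$, as claimed. I do not expect a genuine obstacle here: the only points needing care are the range of validity of the moment generating function (handled automatically because the optimal $s^*$ always falls in the admissible region) and the implicit restriction $\tau<1$ for the lower bound, which is exactly what is needed for $\ln(1-\tau)$ to be defined. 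The computation is otherwise routine, so the ``hard part'' is merely bookkeeping in the optimization and verifying that the stated exponents are reproduced exactly.
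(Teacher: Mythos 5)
Your proof is correct: the moment generating function $(1-2s)^{-m/2}$, the optimizers $s^*=\tau/(2(1+\tau))$ and $s^*=\tau/(2(1-\tau))$, and the resulting exponents all check out, and you rightly flag the implicit restriction $\tau<1$ for the lower tail (for $\tau\geq 1$ the event is empty anyway, since $\sum_i U_i^2\geq 0$). The paper itself gives no proof of this lemma---it is imported from an earlier reference---and your Cram\'er--Chernoff argument is precisely the standard derivation of these $\chi^2$ tail bounds, so there is nothing to reconcile.
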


%\begin{lemma}\label{lemma:gaussian-vecs-inner-prod}
%Let $U^n$ and $V^n$ denote two independent Gaussian vectors of length $n$ with i.i.d.~elements distributed as $\Nc(0,1)$. Then the distribution of  $\langle U^m,V^m\rangle =\sum_{i=1}^nU_iV_i$ is the same as the distribution of $\|U^n\|_2G$, where $G\sim\Nc(0,1)$ and is independent of $\|U^n\|_2$.
%\end{lemma}
%

\begin{lemma}\label{lemma:bd-TV-dist}
Consider distributions $p$ and $q$ on finite alphabet $\Xc$ such that $\|p-q\|_1\leq \e$. Then,
\[
|H(p)-H(q)|\leq -\e\log \e +\e\log |\Xc|.
\]
\end{lemma}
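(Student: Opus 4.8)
The plan is to prove this via the standard per-letter decomposition of entropy, reducing the global bound to a one-dimensional concavity estimate on the function $f(t)=-t\log t$. Writing $H(p)-H(q)=\sum_{x\in\Xc}\big(f(p(x))-f(q(x))\big)$ and applying the triangle inequality, it suffices to control $\sum_{x\in\Xc}|f(p(x))-f(q(x))|$. So the first move is purely a reduction to a sum of differences of $f$ evaluated at the individual masses $p(x)$ and $q(x)$.

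The key step, and the one I expect to be the main obstacle, is the per-coordinate inequality: for $0\le a,b\le 1$ with $|a-b|=\nu\le\tfrac12$, one has $|f(a)-f(b)|\le f(\nu)=-\nu\log\nu$. This is where the non-monotonicity of $f$ on $[0,1]$ (increasing on $[0,1/e]$, decreasing on $[1/e,1]$) makes the argument delicate, since the naive concavity bound only controls one sign of the difference. I would establish it in two halves. Since $f$ is concave with $f(0)=0$, the increment $a\mapsto f(a+\nu)-f(a)$ is non-increasing, so $f(a+\nu)-f(a)\le f(\nu)-f(0)=f(\nu)$. For the opposite sign I would compute $\tfrac{d}{da}\big(f(a)-f(a+\nu)\big)=\log\frac{a+\nu}{a}>0$, so $f(a)-f(a+\nu)$ is maximized at $a=1-\nu$, reducing the claim to $-(1-\nu)\log(1-\nu)\le-\nu\log\nu$ for $\nu\le\tfrac12$; this holds because the map $t\mapsto-t\log t$ takes a larger value at $\nu$ than at $1-\nu$ throughout $(0,\tfrac12]$. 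Since each $|p(x)-q(x)|\le\|p-q\|_1\le\e$, this per-letter bound applies in the relevant small-$\e$ regime (with $\e\le\tfrac12$).

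Finally I would assemble the global bound. Setting $r(x):=|p(x)-q(x)|$ and $R:=\sum_x r(x)=\|p-q\|_1\le\e$, summing the per-letter estimate gives $|H(p)-H(q)|\le\sum_{x\in\Xc}-r(x)\log r(x)$. Factoring $r(x)=R\,s(x)$, where $s(x)=r(x)/R$ is a probability mass function supported on $\Xc$, a direct computation yields $\sum_{x}-r(x)\log r(x)=-R\log R+R\,H(s)$, and since the uniform distribution maximizes entropy, $H(s)\le\log|\Xc|$, giving $|H(p)-H(q)|\le-R\log R+R\log|\Xc|$. Because $g(t):=-t\log t+t\log|\Xc|$ is increasing on $[0,|\Xc|/e]$ and $R\le\e$ lies in this range, $g(R)\le g(\e)=-\e\log\e+\e\log|\Xc|$, which is precisely the asserted bound. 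The only genuinely nontrivial ingredient is the per-coordinate inequality of the second paragraph; the summation and monotonicity steps are routine once it is in hand.
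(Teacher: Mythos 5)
Your proof is correct and follows essentially the same route as the paper's: a per-letter bound $|f(p(x))-f(q(x))|\le f(|p(x)-q(x)|)$ for $f(t)=-t\log t$, followed by writing $|p(x)-q(x)|=\sigma\,s(x)$ with $s$ a probability mass function, applying $H(s)\le\log|\Xc|$, and using monotonicity to pass from $\sigma$ to $\e$. If anything, your handling of the per-coordinate step is more complete than the paper's: the paper only shows that $g(y)=f(y+\e)-f(y)$ is decreasing and hence bounded by $f(\e)$, which controls one sign of the difference, whereas you also bound $-g$ via its maximum at $y=1-\nu$ and the inequality $-(1-\nu)\log(1-\nu)\le-\nu\log\nu$ for $\nu\le\tfrac12$; the restriction $\e\le\tfrac12$ this requires is standard for this lemma, is implicitly needed by the paper's argument as well, and is harmless in the application where $\e\to 0$.
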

\begin{proof}
Define $f(y)=-y\ln y$, for $y\in[0,1]$, and $g(y)=f(y+\e)-f(y)$. Since $g'(y)=\ln(y/(y+\e))<0$, $g$ is a decreasing function of $y$. Therefore,
\[
g(y)\leq -\e\ln\e.
\]
For $x\in\Xc$, let $|p(x)-q(x)|=\e_x$. By our assumption,
\begin{align}
\sigma\triangleq \sum_{x\in\Xc}\e_x\leq \e.\label{eq:e-x}
\end{align}
On the other hand, we just proved that
\[
|-p(x)\ln p(x)+q(x)\ln q(x)| \leq -\e_x\ln \e_x.
\]
Therefore,
\begin{align}
|H(p)-H(q)|&=|\sum_{x\in\Xc}(-p(x)\ln p(x)+q(x)\ln q(x))|\nonumber\\
&\leq \sum_{x\in\Xc}|-p(x)\ln p(x)+q(x)\ln q(x)|\nonumber\\
&\leq \sum_{x\in\Xc}  -\e_x\ln \e_x.
\end{align}
Also
\begin{align}
\sum_{x\in\Xc}  -\e_x\log \e_x &=\sum_{x\in\Xc}  -\e_x\log  {\e_x \sigma\over \sigma}\nonumber\\
 &=-\sigma\log \sigma +\sigma H({\e_x\over \sigma}: x\in\Xc)\nonumber\\
  &\leq -\e\log \e +\e \log |\Xc|,
\end{align}
where the last line follows because $f(y)$ is an increasing function for $y\leq \ex^{-1}$.
\end{proof}

%-----------%-----------%-----------%-----------%-----------%-----------
\subsection{Proof of Lemma \ref{lemma:eq-rep}}

Since the process is stationary,
\begin{align}
H([X^k]_b)&=\sum_{i=1}^kH([X_i]_b|[X^{i-1}]_b)\nonumber\\
&=\sum_{i=1}^kH([X_k]_b|[X^{k-1}_{k-i+1}]_b)\nonumber\\
&\geq kH([X_k]_b|[X^{k-1}]_b).
\end{align}
Therefore,
\begin{align*}
{1\over k} \Big(\limsup_{b\to\infty}{H([X^k]_b)\over b}\Big)&\geq \limsup_{b\to\infty}{H([X_k]_b|[X^{k-1}]_b)\over b}\\
&=\bar{d}_k(X).
\end{align*}
Taking $\liminf$ of both as $k$ grows to infinity proves that
\begin{align}
\liminf_{k\to\infty}{1\over k} \Big(\limsup_{b\to\infty}{H([X^k]_b)\over b}\Big)\geq \bar{d}_o(X).\label{eq:lb-d-k}
\end{align}
 On the other hand, for any set of functions $f_1,\ldots,f_k$, and any $B\in\mathds{R}$, $\sup_{b>b_o}\sum_{i=1}^kf_i(b)\leq \sum_{i=1}^k \sup_{b>b_o}f_i(b)$. Taking the limit of both sides as $b_o$ grows to infinity yields  $\limsup_{b}\sum_{i=1}^kf_i(b)\leq \sum_{i=1}^k \limsup_{b}f_i(b)$. Therefore, letting $f_i(b)=b^{-1}H([X_k]_b|[X_{k-i+1}^{k-1}]_b)$, it follows that
 \begin{align}
\limsup_{b\to\infty}{H([X^k]_b)\over bk}&=\limsup_{b\to\infty} {\sum_{i=1}^kH([X_k]_b|[X^{k-1}_{k-i+1}]_b)\over bk}\nonumber\\
&\leq {1\over k}\sum_{i=1}^k \limsup_{b\to\infty} {H([X_k]_b|[X^{k-1}_{k-i+1}]_b)\over b}\nonumber\\
&= {1\over k}\sum_{i=0}^{k-1} \bar{d}_i(X).\label{eq:limsup-sum}
\end{align}
For a sequence of numbers $(a_k)_k$, such that $\lim_{k\to\infty}a_k=a$, the Ces\`aro mean theorem states that $b_k={1\over k}\sum_{i=1}^ka_i$ also converges to $a$. Therefore,  since $ \lim_{i\to\infty}\bar{d}_i(X)=\bar{d}_o(X)$, by the  Ces\`aro mean theorem, we have
\[
\lim_{k\to\infty} {1\over k}\sum_{i=0}^{k-1} \bar{d}_i(X)=\bar{d}_o(X).
\]
Taking the $\limsup$ of both sides of \eqref{eq:limsup-sum}  yields
\begin{align}
\limsup_{k\to\infty}\limsup_{b\to\infty}{H([X^k]_b)\over bk}\leq \bar{d}_o(X).\label{eq:ub-d-k}
\end{align}
The desired result follows from combining \eqref{eq:lb-d-k} and \eqref{eq:ub-d-k}.

%-----------%-----------%-----------%-----------%-----------%-----------

\subsection{Proof of Theorem \ref{thm:1}}\label{app:A}

We first show that the quantized versions of $X$ at any quantization level $b$ is also a stationary first-order Markov process. To show this,  let $Z_k$ denote the i.i.d.~Bernoulli process that indicates the positions of the jumps in process $X$. In other words, $Z_k=\ind_{X_k\neq X_{k-1}}.$ Then, for any $u^{k+1}\in\Xc_b^{k+1},$ we have
\begin{align*}
\P([X_{k+1}]_b=u_{k+1}|[X^k]_b=u^k)=&\P([X_{k+1}]_b=u_{k+1},Z_{k+1}=0|[X^k]_b=u^k)\\
&+\P([X_{k+1}]_b=u_{k+1},Z_{k+1}=1|[X^k]_b=u^k).
\end{align*}
But,  since by the definition of process $X$ , $Z_{k+1}$ is independent of $X^{k}$, 
\begin{align*}
\P([X_{k+1}]_b=u_{k+1},Z_{k+1}=0|[X^k]_b=u^k)&=\P(Z_{k+1}=0|[X^k]_b=u^k)\P([X_{k+1}]_b=u_{k+1}|Z_{k+1}=0,[X^k]_b=u^k)\\
&=(1-p)\P([X_{k+1}]_b=u_{k+1}).
\end{align*}
and 
\begin{align*}
\P([X_{k+1}]_b=u_{k+1},Z_{k+1}=1|[X^k]_b=u^k)&=\P([X_{k+1}]_b=\P(Z_{k+1}=1|[X^k]_b=u^k)u_{k+1}|Z_{k+1}=1,[X^k]_b=u^k)\\
&=p\ind_{u_{k+1}=u_k}.
\end{align*}
Therefore, overall,
\begin{align*}
\P([X_{k+1}]_b=u_{k+1}|[X^k]_b=u^k)=&(1-p)\P([X_{k+1}]_b=u_{k+1})+p\ind_{u_{k+1}=u_k},
\end{align*}
which only depends on $u_k$. Therefore $[X]_b$ is also a first-order Markov process. Stationarity of $[X]_b$ follows immediately. 

Now since the quantized  process is also a stationary first-order Markov process,
\[
H([X_{k+1}]_b|[X^k]_b)=H([X_{k+1}]_b|[X_k]_b)=H([X_2]_b|[X_1]_b).
\]
Therefore,
\[
\bar{d}_k(X)=\bar{d}_1(X),
\]
and
\[
\underline{d}_k(X)=\underline{d}_1(X),
\]
for all $k\geq 1$.
Let $\Xc_b=\{[x]_b: x\in\Xc \}$ denote the alphabet at resolution $b$.  Then,
\[
H([X_2]_b|[X_1]_b)=\sum_{c\in\Xc_b}\P([X_1]_b=c) H([X_2]_b|[X_1]_b=c).
\]
We next prove that ${1\over b}H([X_2]_b|[X_1]_b=c_1)$ uniformly converges to $p$, as $b$ grows to infinity, for all values of $c_1$.

Define the indicator random variable $I=\ind_{X_2=X_1}$. Given the transition probability of the Markov chain,  $I$ is independent of $X_1$, and $\P(I=1)=1-p$. Also define a random variable $U$, independent of $(X_1,X_2)$, and   distributed according to $f_c$. Then, it follows that
\begin{align*}
\P([X_2]_b=c_2|[X_1]_b=c_1)&=\P([X_2]_b=c_2,I=0|[X_1]_b=c_1)\nonumber\\
&\;\;\;+\P([X_2]_b=c_2,I=1|[X_1]_b=c_1)\nonumber\\
&=p\P([X_2]_b=c_2|[X_1]_b=c_1,I=0)\nonumber\\
&\;\;\;+(1-p)\P([X_2]_b=c_2|[X_1]_b=c_1,I=1)\nonumber\\
&=p\P([U]_b=c_2)+(1-p)\ind_{c_2=c_1},
\end{align*}
where the last line follows from the fact that conditioned on $X_2\neq X_1$, $X_2$, independent of the value of $X_1$, is distributed according to $f_c$. For $a\in\Xc_b$,
\begin{align}
P([U]_b=a)=\int_{a}^{a+2^{-b}}f_c(u)du.\label{eq:P-U-b-mean-value-thm}
\end{align}
On the other hand, by the mean value theorem, there exists $x_a\in[{a},{a+2^{-b}}]$, such that
\[
2^b\int_{a}^{a+2^{-b}}f_c(u)du=f_c(x_a).
\]
Therefore, $P([U]_b=a)=2^{-b}f_c(x_a)$, and
\begin{align}
H([X_2]_b|[X_1]_b=c_1)&=-\sum_{a\in\Xc_b, a\neq c_1 }- (p2^{-b}f_c(x_a))\log (p2^{-b}f_c(x_a))\nonumber\\
&\;\;\;\;\;\;-(p2^{-b}f_c(x_{c_1})+1-p)\log (p2^{-b}f_c(x_{c_1})+1-p)\nonumber\\
&= \sum_{a\in\Xc_b, a\neq c_1 }- (p2^{-b}f_c(x_a))(\log p -b +\log (f_c(x_a)))\nonumber\\
&\;\;\;\;\;\;-(p2^{-b}f_c(x_{c_1})+1-p)\log (p2^{-b}f_c(x_{c_1})+1-p).\label{eq:H-X2-given-X1}
\end{align}
Dividing both sides of \eqref{eq:H-X2-given-X1} by $b$ yields
\begin{align}
{H([X_2]_b|[X_1]_b=c_1)\over b}
&= {(b-\log p)p \over b}\sum_{a\in\Xc_b, a\neq c_1 } 2^{-b}f_c(x_a) \nonumber\\
&\;\;\;\;\;\;-({p\over b})\sum_{a\in\Xc_b, a\neq c_1 } 2^{-b}f_c(x_a) \log (f_c(x_a))\nonumber\\
&\;\;\;\;\;\;-{(p2^{-b}f_c(x_{c_1})+1-p)\log (p2^{-b}f_c(x_{c_1})+1-p)\over b}.\label{eq:H-X2-given-X1-over-b}
\end{align}
On the other hand, from \eqref{eq:P-U-b-mean-value-thm},
\begin{align}
\sum_{a\in\Xc_b } 2^{-b}f_c(x_a)&=\sum_{a\in\Xc_b}\int_{a}^{a+2^{-b}}f_c(u)du\nonumber\\
&=\int_0^1f_c(u)du\nonumber\\
& =1.\label{eq:sum-eq-1}
\end{align}
Also, since $\int_{0}^1 f_c(u)du=1$,
\begin{align*}
\lim_{b\to\infty}\sum_{a\in\Xc_b } 2^{-b}f_c(x_a) \log (f_c(x_a))=h(f_c),\label{eq:ub-q-log-q}
\end{align*}
where $h(f_c)=-\int f_c(u)\log f_c(u)d u$ denotes the differential entropy of $U$. Therefore, for any $\e>0$, there exists $b_{\e}\in\mathds{N}$, such that for $b>b_{\e}$,
\begin{align*}
|\sum_{a\in\Xc_b } 2^{-b}f_c(x_a) \log (f_c(x_a))-h(f_c)|\leq \e.
\end{align*}
Since $f_c$ is bounded by assumption, $M=\sup_{x\in[0,1]} f_c(x)<\infty$, and $h(f_c)\leq \log M<\infty$. Finally, $-q\log  q\leq \ex^{-1}\log \ex$, for $q\in[0,1]$. Therefore, combining   \eqref{eq:H-X2-given-X1-over-b}, \eqref{eq:sum-eq-1} and \eqref{eq:ub-q-log-q}, it follows that, for $b>b_{\e}$,
\begin{align}
\left|{H([X_2]_b|[X_1]_b=c_1)\over b}-p\right| \leq {M\over 2^b}-{p\log p \over b}+{p(h(f_c)+\e)\over b}+{\log \ex\over \ex b},
\end{align}
for all $c_1\in\Xc_b$. Since the right hand side of the above equation does not depend on $c_1$, and goes to zero as $b\to \infty$, for any $\e'>0$, there exists $b_{\e'}$, such that for $b>\max\{b_{\e},b_{\e'}\}$,
\begin{align}
\left|{H([X_2]_b|[X_1]_b=c_1)\over b}-p\right| \leq \e',
\end{align}
and
\begin{align}
\left|{H([X_2]_b|[X_1]_b)\over b}-p \right|&\leq \sum_{c_1\in\Xc_b}\P([X_1]_b=c_1)\left|{H([X_2]_b|[X_1]_b=c_1) \over b}-p\right|\nonumber\\
&\leq \e'\sum_{c_1\in\Xc_b}\P([X_1]_b=c_1)\nonumber\\
&=\e',
\end{align}
which concludes the proof.

%-----------%-----------%-----------%-----------%-----------%-----------

\subsection{Proof of Theorem \ref{thm:2}}\label{app:B}
Since the process is stationary and Markov of order $l$, by Theorem \ref{thm:bound-do} $\limsup_{b\to\infty} b^{-1}H([X_{l+1}]_b|X^l)\leq \bar{d}_o(X)\leq \bar{d}_{\ell}(X)$ and $\liminf_{b\to\infty} b^{-1}H([X_{l+1}]_b|X^l)\leq  \underline{d}_o(X)\leq \underline{d}_{\ell}(X).$

Define the  indicator random variable $I=\ind_{Z_l=0}$. By the definition of the Markov chain, $\P(I=1)=1-p$. Then,  since $I$ is independent of $X^l$, we have
%\begin{align}
%\P([X_{l+1}]_b=c_{l+1}|[X^l]_b=c^l)\;=\;&\P([X_{l+1}]_b=c_{l+1},I=0|[X^l]_b=c^l)\nonumber\\
%&+\P([X_{l+1}]_b=c_{l+1},I=1|[X^l]_b=c^l)\nonumber\\;=\;&p \P([\sum_{i=1}^la_iX_{l-i}+U]_b=c_{l+1}|[X^l]_b=c^l)\nonumber\\
%&+(1-p)\P([\sum_{i=1}^la_iX_{l-i}]_b=c_{l+1}|[X^l]_b=c^l),
%\end{align}
\begin{align}
H([X_{l+1}]_b|[X^l]_b)\;\leq\; & H([X_{l+1}]_b,I|[X^l]_b)\nonumber\\
\;\leq\; & 1+ H([X_{l+1}]_b|[X^l]_b,I)\nonumber\\
\;=\;&1+ p H([\sum_{i=1}^la_iX_{l-i}+U]_b|[X^l]_b)\nonumber\\
&+(1-p)H([\sum_{i=1}^la_iX_{l-i}]_b|[X^l]_b),\label{eq:lb-cond-prob}
\end{align}
where $U$ is independent of $X^l$ and is distributed according to $f_c$. 

Conditioned on $[X^l]_b=c^l$, where  $c_1,\ldots,c_{l+1}\in\Xc_b$, we have
\[
c_i\leq X_i<c_i+2^{-b},
\]
for $i=1,\ldots,l$, and
\[
\Big| \sum_{i=1}^{l}a_i X_{l-i}-\sum_{i=1}^la_ic_{l-i}\Big| \leq 2^{-b}\sum_{i=1}^l|a_i|.
\]
Let $M=\left\lceil \sum_{i=1}^l|a_i| \right\rceil$ and $c=\sum_{i=1}^la_ic_{l-i}$. Then,
\[
\sum_{i=1}^{l}a_i X_{l-i}\in [c-M2^{-b}, c+M2^{-b}].
\]
Therefore, $[\sum_{i=1}^{l}a_i X_{l-i}]_b$ can take only $2M+1$ different values, and as a result $H([\sum_{i=1}^la_iX_{l-i}]_b|[X^l]_b)\leq \log(2M+1)$. Since $M$ does not depend on $b$, it follows that
\[
\lim_{b\to\infty} {H([\sum_{i=1}^la_iX_{l-i}]_b|[X^l]_b)\over b}=0.
\]
We next prove that
\[
\lim_{b\to\infty}{H([\sum_{i=1}^la_iX_{l-i}+U]_b|[X^l]_b)\over b}=1,
\]
for any absolutely continuous distribution  with pdf $f_c$. This proves that $\bar{d}_{l}(X)=\underline{d}_{l}(X)=p.$

To bound $H([\sum_{i=1}^la_iX_{l-i}+U]_b|[X^l]_b)$, we need to study $\P([\sum_{i=1}^la_iX_{l-i}+U]_b=c|[X^l]_b=c^l)$,  where $c^l\in\Xc_b^l$, and $c\in\Xc_b$. Let $\Nc(c^l,b)=\{x^l: c_i\leq x_i\leq c_i+2^{-b}, i=1,\ldots,l\}$, and define the function $g:\mathds{R}^l\to\mathds{R}$, by $g(x^l)=\sum_{i=1}^la_ix_{l-i}$. Note that
\begin{align}
\P\Big([\sum_{i=1}^la_iX_{l-i}+U]_b=c\Big|[X^l]_b=c^l\Big) \;=\; & {\P([\sum_{i=1}^la_iX_{l-i}+U]_b,[X^l]_b=c^l)  \over \P([X^l]_b=c^l) }\nonumber\\
\;=\; & {\int_{\Nc(c^l,b)}\int_{c-g(x^l)}^{c-g(x^l)+2^{-b}} f(x^l) f_c(u)du d x^l  \over \P([X^l]_b=c^l) },\label{eq:bayes-rule}
\end{align}
where $f(x^l)$ denotes the pdf of $X^l$. By the mean value theorem, there exists $\delta(x^l)\in(0,2^{-b})$, such that
\begin{align}
\int_{c-g(x^l)}^{c-g(x^l)+2^{-b}}  f_c(u)du=2^{-b}f_c(c-g(x^l)+\delta(x^l)).\label{eq:mean-value-thm}
\end{align}
Combining \eqref{eq:bayes-rule} and \eqref{eq:mean-value-thm} yields that
\begin{align}
\P\Big([\sum_{i=1}^la_iX_{l-i}+U]_b=c\Big|[X^l]_b=c^l\Big)
\;=\; &  {2^{-b}\int_{\Nc(c^l,b)} f(x^l) f_c(c-g(x^l)+\delta(x^l))d x^l  \over \int_{\Nc(c^l,b)} f(x^l) d x^l }.
\end{align}
Define  the pdf $p_{c^l,b}(y^l)$ over $\Nc(c^l,b)$ as
\[
p_{c^l,b}(y^l)= { f(y^l) \over \int_{\Nc(c^l,b)} f(x^l) d x^l }.
\]
Then, $\P([\sum_{i=1}^la_iX_{l-i}+U]_b=c|[X^l]_b=c^l) =2^{-b}\E[f_c(c-g(Y^l)-\delta(Y^l))]$, where $Y^l\sim p_{c^l,b}$. Hence,
\begin{align}
H\Big([\sum_{i=1}^la_iX_{l-i}+U]_b\Big|[X^l]_b\Big)\;=\;&\sum_{c^l}\sum_c (b-\log \E[f_c(c-g(Y^l)-\delta(Y^l))] )\nonumber\\
&\hspace{1cm}\times\P\Big(\Big[\sum_{i=1}^la_iX_{l-i}+U\Big]_b=c\Big|[X^l]_b=c^l\Big) \nonumber\\
\;=\;&b-\sum_{c^l}\sum_c \log \E[f_c(c-g(Y^l)-\delta(Y^l))] \nonumber\\
&\hspace{1cm}\times\P\Big(\Big[\sum_{i=1}^la_iX_{l-i}+U\Big]_b=c\Big|[X^l]_b=c^l\Big).
\end{align}
Since by assumption $f_c$ is bounded on its support between $\a$ and $\b$, then for $b$ large enough, if $c-\sum_{i=1}^l a_ic_{l-i}\in\Zc$, then $\E[f_c(c-g(Y^l)-\delta(Y^l))]$ is also bounded between $\a$ and $\b$, and hence the desired result follows. That is, $\lim\limits_{b\to\infty} b^{-1}H([\sum_{i=1}^la_iX_{l-i}+U]_b|[X^l]_b)=1$.

For the lower bound, we next prove that $\limsup_{b\to\infty} {H([X_{l+1}]_b|X^l)\over b}\geq p$ and $\liminf_{b\to\infty} {H([X_{l+1}]_b|X^l)\over b}\geq p$.
Note that 
\begin{align}
H([X_{l+1}]_b|X^l)\;\geq\; & H([X_{l+1}]_b|X^l,I)\nonumber\\
\;=\;& p H([\sum_{i=1}^la_iX_{l-i}+U]_b|X^l)+(1-p)H([\sum_{i=1}^la_iX_{l-i}]_b|X^l)\nonumber\\
\;=\;& p H([\sum_{i=1}^la_iX_{l-i}+U]_b|X^l), \label{eq:ub-cond-prob}
\end{align}
where the line follows from the fact that $H([\sum_{i=1}^la_iX_{l-i}]_b|X^l)=0$. Therefore, $\limsup_{b\to\infty} {H([X_{l+1}]_b|X^l)\over b}\geq \limsup_{b\to\infty} {H([\sum_{i=1}^la_iX_{l-i}+U]_b|X^l)\over b}$ and $\liminf_{b\to\infty} {H([X_{l+1}]_b|X^l)\over b}\geq \liminf_{b\to\infty} {H([\sum_{i=1}^la_iX_{l-i}+U]_b|X^l)\over b}$. But, 
\begin{align}
\liminf_{b\to\infty} {H([\sum_{i=1}^la_iX_{l-i}+U]_b|X^l)\over b}&=\liminf_{b\to\infty} \int{1\over b}H([\sum_{i=1}^la_ix_{l-i}+U]_b|X^l=x^l)d\mu(x^l)\nonumber\\
&\stackrel{(a)}{\geq } \int\liminf_{b\to\infty}{1\over b}H([\sum_{i=1}^la_ix_{l-i}+U]_b|X^l=x^l)d\mu(x^l)\nonumber\\
&\stackrel{(b)}{= }1,\label{eq:35}
\end{align}
where (a) follows from the Fatou's Lemma, and $(b)$ follows because $U$ has an absolutely continuous distribution. On the other hand,
\begin{align}
\liminf_{b\to\infty} {H([\sum_{i=1}^la_iX_{l-i}+U]_b|X^l)\over b}\leq \limsup_{b\to\infty} {H([\sum_{i=1}^la_iX_{l-i}+U]_b|X^l)\over b}\leq 1,
\end{align} 
where the last inequality follows because ${H([\sum_{i=1}^la_iX_{l-i}+U]_b|X^l)\over b}\leq 1$, for all $b$. Therefore, combining this with \eqref{eq:35} yields the desired result. That is, $\lim_{b\to\infty} {H([\sum_{i=1}^la_iX_{l-i}+U]_b|X^l)\over b}=1$, and therefore, $\limsup_{b}{H([X_{l+1}]_b|X^l)\over b}\geq p$ and $\liminf_{b}{H([X_{l+1}]_b|X^l)\over b}\geq p$. These lower bounds combined with the upper bounds derived earlier prove that $\bar{d}_o(X)=\underline{d}_o(X)=p$.

%---------%--------%---------%--------%---------%--------%---------%--------

\subsection{Proof of Theorem \ref{thm:moving-ave}}
Define process $Z$ as an indicator of the locations where $Y$ is non-zero. That is, $Z_i=\ind_{Y_i\neq 0}$. Then,
\begin{align}
\bar{d}_k(X)&=\limsup_{b}{H([X_k]_b|[X^{k-1}]_b)\over b}\nonumber\\
&\leq \limsup_{b}{H([X_k]_b,Z_{k-1}|[X^{k-1}]_b)\over b}.
%&\leq \limsup_{b}{1+ H([X_k]_b,Z_{k-1}|Z^{k-2},[X^{k-1}]_b)\over b}\nonumber\\
%&\leq \limsup_{b}{H([X_k]_b,Z_{k-1}|Z^{k-2},[X^{k-1}]_b)\over b}.
\end{align}
But since process $Y$ is an i.i.d. process and $X^{k-1}$ only depends on $Y_{-l+1}^{k-2}$, $Z_{k-1}$ is independent of $[X^{k-1}]_b$ and  therefore
\begin{align} 
{H([X_k]_b,Z_{k-1}|[X^{k-1}]_b)\over b}&={h(p)\over b}+p{H([X_k]_b|Z_{k-1}=1,[X^{k-1}]_b)\over b}\nonumber\\
&\;\;\;+(1-p){H([X_k]_b|Z_{k-1}=0,[X^{k-1}]_b)\over b}.
\end{align}
But, 
\begin{align}
\limsup_{b} {H([X_k]_b|Z_{k-1}=1,[X^{k-1}]_b)\over b}\leq \limsup_{b} {H([X_k]_b)\over b}\leq 1.
\end{align}
Therefore, 
\begin{align}
\bar{d}_k(X)&\leq p+(1-p)\limsup_{b} {H([X_k]_b|Z_{k-1}=0,[X^{k-1}]_b)\over b}. \label{eq:bd-dk-main}
\end{align}
We next prove that $\lim_{k\to\infty}\limsup_{b} {H([X_k]_b|Z_{k-1}=0,[X^{k-1}]_b)\over b}=0.$ 
Note that
\begin{align}
H([X_k]_b|Z_{k-1}=0,[X^{k-1}]_b)&=H([X_k]_b|Z_{k-1}=0,[X^{k-1}]_b,Y_{-l+1}^{-1})+I(Y_{-l+1}^{-1};[X_k]_b|Z_{k-1}=0,[X^{k-1}]_b).\label{eq:main-bd-Xk}
\end{align}
To prove the desired result, we first show that 
\begin{align}
\limsup_b{1\over b}H([X_k]_b|Z_{k-1}=0,[X^{k-1}]_b,Y_{-l+1}^{-1})=0.\label{eq:bd-first-X-k1}
\end{align}
Let $\Yh_i$, $i=0,1,\ldots,k-2$, denote an estimated value of $Y_i$,  as a function of $([X^{k-1}]_b,Y_{-l+1}^{-1})$, defined as follows. Since $X_i={1\over l}{\sum_{j=1}^l}Y_{i-j}$, we have $Y_i=lX_{i+1}-\sum_{j=1}^{l-1}Y_{i-j}$. From this equality, we define 
\[
\Yh_0=l[X_1]_b-\sum_{j=-l+1}^{-1}Y_{i-j},
\] 
\[
\Yh_i=l[X_{i+1}]_b-\sum_{j=i-l}^{i-2}Y_j-\sum_{i=0}^{i-1}\Yh_j,
\] 
for $i=1,\ldots,l-1$, and  
\[
\Yh_i=l[X_{i+1}]_b-\sum_{j=1}^{l-l}\Yh_{i-j},
\] 
for $i\geq l.$ Define the estimation error process as  $E_i=Y_i-\Yh_i$. For $i=0$,
\begin{align}
E_0&=l[X_1]_b-\sum_{j=-l+1}^{-1}Y_{i-j} -( lX_1-\sum_{j=-l+1}^{-1}Y_{i-j} )\nonumber\\
&= l([X_1]_b-X_1). 
\end{align}
Therefore, $|E_0|\leq l2^{-b}$. For $i=1,\ldots,l-1$,
\[
|E_i|\leq l2^{-b}+\sum_{i=0}^{i-1}|E_i|,
\]
and for $i\geq l$,
\begin{align}
|E_i|= |Y_i-\Yh_i|&\leq l|[X_{i+1}]_b-X_{i+1}|+\sum_{j=1}^{l-l} |E_{i-j}|\nonumber\\
&\leq l2^{-b}+\sum_{j=1}^{l-l} |E_{i-j}|.
\end{align}
We prove by induction that  $|E_i|\leq 2^il 2^{-b}$, for all $i$. Assume that we know that for $j=0,1,\ldots,i$, $|E_j|\leq 2^jl 2^{-b}$. Let $E_{j}=0$, for $j<0$. Then, for $i+1$,
\[
|E_{i+1}|\leq  l2^{-b} + \sum_{j=1}^{l}|E_{i+1-j}| \leq l2^{-b}+ l2^{-b} \sum_{j=1}^{l} 2^{i+1-j}\leq l2^{-b} \sum_{j=1}^{i}2^{j}= l2^{-b}(2^{i+1}-1)\leq 2^{i+1}l2^{-b}.
\]
Given the estimates $\{\Yh_i\}_i$, and since $X_k={1\over l}{\sum_{j=1}^l}Y_{k-j}$, define an estimate of $X_k$ as a function of $([X^{k-1}]_b,Y_{-l+1}^{-1})$ as follows
\[
\Xh_{k}=\sum_{j=1}^l\Yh_{k-j}.
\] Then, by the triangle inequality, 
\[
|X_k-\Xh_k|\leq \sum_{j=1}^l|\Yh_{k-j}-Y_{k-j}|=\sum_{j=1}^l |E_{k-j}|\leq {1\over l} \sum_{j=1}^l 2^{k-j}l 2^{-b}\leq 2^{k}l2^{-b}.
\]
This proves that given $([X^{k-1}]_b,Y_{-l+1}^{-1})$ there exists an estimate of $X_k$, whose distance to $X_k$ can be bounded by $2^{k}l2^{-b}$. Therefore, since
$H([X_k]_b|Z_{k-1}=0,[X^{k-1}]_b,Y_{-l+1}^{-1})=H([\Xh_k+X_k-\Xh_k]_b|Z_{k-1}=0,[X^{k-1}]_b,Y_{-l+1}^{-1})$, the remaining ambiguity in $[X_k]_b$ given $([X^{k-1}]_b,Y_{-l+1}^{-1})$ can be bounded by 
\[
\log{2^{k}l2^{-b}\over 2^{-b}}=k+\log l,
\]
which proves \eqref{eq:bd-first-X-k1}.

We next prove that 
\[
\limsup_{b}{1\over b}{I(Y_{-l+1}^{-1};[X_k]_b|Z_{k-1}=0,[X^{k-1}]_b)\over b}\leq (1-p)^{k-l}.
\]
Define for any $k>l$, define random variable $J_k$ as an indicator function, which is equal to one if there exists $j$ such that $l+1\leq j\leq k$, and $Y_{j-l+1}^j=0_k$. To bound $I(Y_{-l+1}^{-1};[X_k]_b|Z_{k-1}=0,[X^{k-1}]_b)$, note that 
\begin{align}
I(Y_{-l+1}^{-1};[X_k]_b|Z_{k-1}=0,[X^{k-1}]_b)&\leq I(Y_{-l+1}^{-1},J_k;[X_k]_b|Z_{k-1}=0,[X^{k-1}]_b)\nonumber\\
&\leq 1+I(Y_{-l+1}^{-1};[X_k]_b|J_k,Z_{k-1}=0,[X^{k-1}]_b)\nonumber\\
&\leq 1+I(Y_{-l+1}^{-1};[X_k]_b|J_k=0,Z_{k-1}=0,[X^{k-1}]_b)\P(J_k=0)\nonumber\\
&\;\;+I(Y_{-l+1}^{-1};[X_k]_b|J_k=1,Z_{k-1}=0,[X^{k-1}]_b)\P(J_k=1).
\end{align}
But conditioned on $J_k=1$ and $Z_{k-1}=0$,  $Y_{-l+1}^{-1}\to [X^{k-1}]_b \to [X_k]_b$, and therefore $I(Y_{-l+1}^{-1};[X_k]_b|J_k=1,Z_{k-1}=0,[X^{k-1}]_b)=0$. On the other hand, 
\[
I(Y_{-l+1}^{-1};[X_k]_b|J_k=0,Z_{k-1}=0,[X^{k-1}]_b)\leq b.
\]
Moreover, dividing $Y^k$ into non-overlapping blocks of size $l$, it follows that if there is no all-zero block of size $l$, then none of these non-overlapping blocks is all-zero. Now since these blocks are non-overlapping and process $Y$ is i.i.d., it shows that
\[
\P(J_k=0)\leq ((1-p)^l)^{\lfloor {k\over l} \rfloor}\leq (1-p)^{k-l}.
\]
Therefore, 
\[
\limsup_b {I(Y_{-l+1}^{-1};[X_k]_b|Z_{k-1}=0,[X^{k-1}]_b)\over b}\leq (1-p)^{k-l}.
\]
Combing this result with \eqref{eq:bd-dk-main}, \eqref{eq:main-bd-Xk} and \eqref{eq:bd-first-X-k1} proves that 
\[
\bar{d}_k(X)\leq p+(1-p)^{k-l+1},
\]
which as $k$ grows to infinity proves that $\bar{d}_o(X)\leq p$.

%-----------%-----------%-----------%-----------%-----------%-----------
\subsection{Proof of Theorem \ref{thm:3}}

We show that for any $\e>0$,
\[
\P({1\over \sqrt{n}}\|X_o^n-\Xh_o^n\|_2>\e)\to 0,
\]
as $n\to \infty$.
Let $X_o^n=[X_o^n]_b+q_o^n$ and $\Xh_o^n=[\Xh_o^n]_b+\hat{q}_o^n$. By assumption, $AX_o^n=A\Xh_o^n$, and therefore, $A([X_o^n]_b-[\Xh_o^n]_b)=A(q_o^n-\qh_o^n)$. Note that
\begin{align}
\|A([X_o^n]_b-[\Xh_o^n]_b)\|_2=\|A(q_o^n-\qh_o^n)\|_2\leq \sigma_{\max}(A)\|q_o^n-\qh_o^n\|_2.\label{eq:basic-ineq-thm4}
\end{align}
Define the event
\[
\Ec_1\triangleq \{\sigma_{\max}(A)\leq \sqrt{n}+2\sqrt{m}\}.
\]
 From \cite{CaTa05}, $\P(\Ec_1^c)\leq \ex^{-m/2}$. But,
 \[
 \|q_o^n-\qh_o^n\|_2\leq \|q_o^n\|_2+\|\qh_o^n\|_2\leq \sqrt{n}2^{-b+1}.
 \]
 Hence, conditioned on $\Ec_1$,
\begin{align*}
\|A(q_o^n-\qh_o^n)\|_2&\;\leq\; \sigma_{\max}(A)\|q_o^n-\qh_o^n\|_2\\
&\;\leq\; n\Big(1+2\sqrt{m\over n}\Big)2^{-b+1}.
\end{align*}

As the next step, we derive a lower bound on $\|A([X_o^n]_b-[\Xh_o^n]_b)\|_2$, which holds with high probability. For a fixed vector $u^n$ and for any $\tau\in(0,1)$, by Lemma \ref{lemma:chi},
\begin{align}
\P(\|Au^n\|_2\leq \sqrt{m(1-\tau)}\|u^n\|_2)\leq \ex^{{m\over 2}(\tau+\ln(1-\tau))}.\label{eq:lower-bd-size-Au}
\end{align}
However, $[X_o^n]_b-[\Xh_o^n]_b$ is not a fixed vector. But as we will show next, with high probability, we can upper bound the number of such vectors possible.

Since $\Xh_o^n$ is the solution of \eqref{eq:MEP-ideal}, we have
\begin{align}
\hat{H}_k([\Xh_o^n]_b)\leq \hat{H}_k([X_o^n]_b).\label{eq:Hk-compare}
\end{align}
On the other hand as proved in Appendix \ref{app:D},
\begin{align}
{1\over n}\ell_{\rm LZ}([\Xh_o^n]_b)\leq \hat{H}_k([\Xh_o^n]_b)+{b(kb+b+3)\over (1-\e_n)\log n-b}+\gamma_n,\label{eq:LZ-ub-Hk-Xhat}
\end{align}
where $\gamma_n=o(1)$, and does not depend on  $[\Xh_o^n]_b$ or $b$, and
\[
\e_n={\log ((2^b-1)(\log n)/b+2^b-2)+2b\over \log n}.
\]
Combining \eqref{eq:Hk-compare} and \eqref{eq:LZ-ub-Hk-Xhat} and dividing both sides by $b=b_n$ yields
\begin{align}
{1\over n b_n}\ell_{\rm LZ}([\Xh_o^n]_{b_n})\leq {\hat{H}_k([X_o^n]_{b_n})\over b_n}+{kb_n+b_n+3\over (1-\e_n)\log n-b_n}+{\gamma_n\over b_n}.\label{eq:-bd-LZ-H-hat}
\end{align}

As the next step, we find an upper bound on $ \hat{H}_k([X_o^n]_{b_n})/b_n$ that holds with high probability.  The upper information dimension of process $X$, $\bar{d}_o(X)$, is defined as $\lim_{k\to\infty}\bar{d}_k(X)$. By Lemma \ref{lemma:1}, $\bar{d}_k(X)$ is a non-increasing function of $k$. Therefore,  given $\d_1>0$, there exists $k_{\d_1}>0$, such that for any $k>k_{\d_1}$,
\begin{align}
\bar{d}_o(X)\;\leq\; \bar{d}_k(X)\;\leq\; \bar{d}_o(X)+\d_1.\label{eq:bound-do-dk}
\end{align}
%Since $k=k_n$ by construction is a diverging sequence,  for $n$ large enough, $k_n>k_{\d_1}$,
%\[
%d_{k_n}(X)\leq d_{k_{\d_1}}(X).
%\]
By the definition of $d_{k_{\d_1}}(X)$, given $\d_2>0$, there exists $b_{\d_2}$, such that for $b\geq b_{\d_2}$,
\begin{align}
{H([X_{k_{\d_1}+1}]_{b}|[X^{k_{\d_1}}]_{b})\over b} \leq d_{k_{\d_1}}(X)+\d_2.\label{eq:bd-dk-b-delta}
\end{align}
As a reminder, by the definition, $\hat{H}_k(x^n)=H(U_{k+1}|U^{k})$, where $U^{k+1}\sim {p}_{k+1}(\cdot|x^n)$. Therefore, $\hat{H}_{k}([X_o^n]_{b_n})/b_n$ is a decreasing function of $k$. Therefore, since $k=k_n$ by construction is a diverging sequence,  for $n$ large enough, $k_n>k_{\d_1}$, and
\[
{\hat{H}_{k_{n}}([X_o^n]_{b_n})\over b_n}\leq {\hat{H}_{k_{\d_1}}([X_o^n]_{b_n})\over b_n}.
\]
We now  prove that, given our choice of parameters, for large values of $n$, $\hat{H}_{k_{\d_1}}([X_o^n]_{b_n})/b_n=H(U_{k_{\d_1}+1}|U^{k_{\d_1}})/{b_n}$, where $U^{k_{\d_1}+1}\sim p_{k_{\d_1}+1}(\cdot|[X_o^n]_{b_n})$, is close to
\[
{H([X_{k_{\d_1}+1}]_{b_n}|[X^{k_{\d_1}}]_{b_n})\over b_n},
\]
with high probability.

%Since the original process is an stationary aperiodic Markov chain of order $l$, $\{[X_i]_{b}\}_{i=1}^{\infty}$ is also an aperiodic stationary  Markov chain of order $l$, which has discrete alphabet.
Since $X$ is $\Psi^*$-mixing process, by Theorem  \ref{thm:exp-rate-markov}, given $\e_1>0$, there exists $g\in\mathds{N}$, only depending on $\e_1$ and the distribution of the source process, such that for any  $n>6(k+g)/\e_1+k$,
\[
\P( \|p_k(\cdot|[X^n]_b)-\mu_k\|_1\geq \e_1)\leq2^{c\e_1^2/8} (k+g) n^{2^{kb}}2^{-nc\e_1^2\over 8(k+g)},
\]
where $c=1/(2\ln 2)$. (The  empirical distribution function $p_{k_{\d_1+1}}(\cdot|[X^n]_{b_n})$ is defined in Section \ref{sec:cond-ent}.) Let
\[
\Ec_2\triangleq \{\|p_{k_{\d_1+1}}(\cdot|[X^n]_{b_n})-\mu_{k_{\d_1+1}}\|_1\leq \e_1/(k_{\d_1}+1)\}.
\]  Letting $k=k_{\d_1}+1$, where $k_{\d_1}>l$, and $b=b_n$, for $n$ large enough,
\begin{align}
\P(\Ec_2^c)\leq2^{c\e_1^2\over 8(k_{\d_1}+1)} (k_{\d_1}+g+1) n^{2^{b_n(k_{\d_1}+1)}}2^{-nc\e_1^2\over 8(k_{\d_1}+g+1)(k_{\d_1}+1)^2}.\label{eq:bd-Ec-2}
\end{align}
Let  $U^{k_{\d_1}+1}\sim p_{k_{\d_1}+1}(\cdot|[x^n]_{b_n})$. Then, conditioned on $\Ec_2$,
\begin{align}
|\hat{H}_{k_{\d_1}}([x^n]_{b_n})-H([X_{k_{\d_1}+1}]_{b_n}|[X^{k_{\d_1}}]_{b_n})|&=|H(U_{k_{\d_1}}|U^{k_{\d_1}+1})-H([X_{k_{\d_1}+1}]_{b_n}|[X^{k_{\d_1}}]_{b_n})\nonumber\\
&=|H(U^{k_{\d_1}+1})-H(U^{k_{\d_1}})-H([X^{k_{\d_1}+1}]_{b_n})+H([X^{k_{\d_1}}]_{b_n})|\nonumber\\
&\leq |H(U^{k_{\d_1}+1})-H([X^{k_{\d_1}+1}]_{b_n})|+|H(U^{k_{\d_1}})-H([X^{k_{\d_1}}]_{b_n})|\nonumber\\
&\stackrel{(a)}{\leq} -{2\e_1\over k_{\d_1}+1}\log ({\e_1\over k_{\d_1}+1}) +\e_1 b_n,\label{eq:H-hat-k-delta1}
\end{align}
where (a) follows from  Lemma \ref{lemma:bd-TV-dist}. Dividing both sides of \eqref{eq:H-hat-k-delta1} by $b_n$ yields
\begin{align}
\Big|{\hat{H}_{k_{\d_1}}([x^n]_{b_n})\over b_n}-{H([X_{k_{\d_1}+1}]_{b_n}|[X^{k_{\d_1}}]_{b_n})\over b_n}\Big|&\leq -{2\e_1\over (k_{\d_1}+1)b_n}\log ({\e_1\over k_{\d_1}+1}) +\e_1.\label{eq:bd-hat-H-over-bn}
\end{align}
On the other hand, $b_n$ is a diverging sequence of $n$. Therefore, for $n$ large enough, $b_n\geq b_{\d_2}$, and as a result, combining \eqref{eq:bound-do-dk}, \eqref{eq:bd-dk-b-delta} and \eqref{eq:bd-hat-H-over-bn} yields that, for $n$ large enough, conditioned on $\Ec_2$,
\begin{align}
{\hat{H}_{k_n}([X_o^n]_{b_n})\over b_n}\leq \bar{d}_o(X)+\d_3,\label{eq:bd-hat-H-vs-do}
\end{align}
where $\d_3\triangleq \d_1+\d_2- {2\e_1\over (k_{\d_1}+1)b_n}\log {\e_1\over k_{\d_1}+1} +\e_1$  can be made arbitrarily small by choosing $\e_1$, $\d_1$ and $\d_2$ small enough. Furthermore, given our choice of parameters $b_n$ and $k_n$,  from \eqref{eq:-bd-LZ-H-hat} and \eqref{eq:bd-hat-H-vs-do}, conditioned on $\Ec_2$,
\begin{align}
{1\over nb_n}\ell_{\rm LZ}([\Xh_o^n]_{b_n})\leq  \bar{d}_o(X)+\d_4,\label{eq:upper-bd-LZ-1}
\end{align}
and
\begin{align}
{1\over nb_n}\ell_{\rm LZ}([X_o^n]_{b_n})\leq  \bar{d}_o(X)+\d_4,\label{eq:upper-bd-LZ-1}
\end{align}
where $\d_4\triangleq {(k_nb_n+b_n+3)/ ((1-\e_n)\log n-b_n)}+{\gamma_n/ b_n}+\d_3$ can be made arbitrarily small.

Let $\Cc_n\triangleq \{[x^n]_{b_n}: \; \ell_{\rm LZ}([x^n]_{b_n})\leq nb_n(\bar{d}_o(X)+\d_4)\}$. Since the Lempel-Ziv code is a uniquely decodable code, each binary string corresponding to an LZ-coded sequence corresponds to a unique uncoded sequence. Hence, the number of sequences in $\Cc_n$, \ie the number of quantized sequences satisfying the upper bound of \eqref{eq:upper-bd-LZ-1}, can be  bounded as
\begin{align}
|\Cc_n|&\leq \sum_{i=1}^{ nb_n(\bar{d}_o(X)+\d_4)}2^i\nonumber\\
&\leq 2^{nb_n(\bar{d}_o(X)+\d_4)+1}.\label{eq:bd-size-Cc}
\end{align}
Define the event $\Ec_3$ as follows:
\[
\Ec_3\triangleq\{\|A([X_o^n]_{b_n}-[x^n]_{b_n})\|_2\geq \sqrt{m(1-\tau)}\|[X_o^n]_{b_n}-[x^n]_{b_n}\|_2; \forall \; x^n\in[0,1]^n, [x^n]_{b_n}\in\Cc_n\}.
\]
Assume that we fix $X_o^n$ and only consider the randomness  in drawing the matrix $A$. Then, applying the union bound to  \eqref{eq:lower-bd-size-Au}  and noting the upper bound on the size of $\Cc_n$, derived in \eqref{eq:bd-size-Cc}, we get
\begin{align}
\P_A(\Ec_3^{c})\leq 2^{nb_n(\bar{d}_o(X)+\d_4)+2}\ex^{{m\over 2}(\tau+\ln(1-\tau))}.\label{eq:bd-Ec-Xo-n}
\end{align}
Since $X_o^n$ is a random vector, $\P_A(\Ec_3^{c})$ is a random variable depending on $X_o^n$.  Taking the expectation of  both sides of \eqref{eq:bd-Ec-Xo-n}, it follows that
\begin{align}
\E_{X_o^n}[\P_A(\Ec_3^{c})]\leq 2^{nb_n(\bar{d}_o(X)+\d_4)+2}\ex^{{m\over 2}(\tau+\ln(1-\tau))}.\label{eq:Ec-Xo-n-bd}
\end{align}
We next prove with the right choice of parameters,  the right hand side of  \eqref{eq:Ec-Xo-n-bd} goes to zero. Let
\[
\tau={1-{1\over (\log n)^{2/(1+\upsilon)} }},
\]
where $\upsilon\in(0,1)$. Then, since by assumption $b=b_n=\lceil \log\log n\rceil $, and $m=m_n\geq (1+\d)\bar{d}_o(X)n$, it follows that
\begin{align}
\E_{X_o^n}[\P_A(\Ec_3^{c})]&\leq 2^{nb_n(\bar{d}_o(X)+\d_4)+2}\ex^{{m\over 2}(1-{2\over 1+\upsilon} \ln \log n})\nonumber\\
&\leq 2^{n(\log \log n+1)(\bar{d}_o(X)+\d_4)+2}2^{0.5(1+\d)\bar{d}_o(X)n(\log \ex -{2\over 1+\upsilon} \log \log n)}\nonumber\\
&=2^{n(\log \log n) (1+\d)\bar{d}_o(X)({1+\d_5\over 1+\d}-{1\over 1+\upsilon }+\rho_n) }, \label{eq:bd-Ec-3-cond}
\end{align}
where $\d_5\triangleq \d_4/\bar{d}_o(X)$ and $\rho_n=o(1)$. Note that $\d_4$ and as a result $\d_5$ can be made arbitrarily small for $n$ large enough. Given $\d>0$, we choose $\delta_4$ such that $\delta_5<\delta$ and set $\upsilon=0.5(\d-\d_5)/(1+\d_5)$. Then, since for this choice of parameters for $n$ large enough,
\[
(1+\d)\Big({1+\d_5\over 1+\d}-{1\over 1+\upsilon }+\rho_n\Big)\leq-\Big({\delta-\delta_5\over 4}\Big),
\]
from \eqref{eq:bd-Ec-3-cond}, we have
\begin{align}
\E_{X_o^n}[\P_A(\Ec_3^{c})]&\leq 2^{-n(\log \log n)\bar{d}_o(X)(\delta-\d_5)/4  }. \label{eq:bd-Ec-3-cond-final}
\end{align}

On the other hand,
\begin{align}
\E_{X_o^n}[\P_A(\Ec_3^{c})]&=\E_{X_o^n}[\E_{A}[\ind_{\Ec_3^{c}}]]\nonumber\\
&=\E_{A}[\E_{X_o^n}[\ind_{\Ec_3^{c}}]]\nonumber\\
&=\E_{A}[\P_{X_o^n}(\Ec_3^{c})],\label{eq:fubini}
\end{align}
where the first step follows from  Fubini's Theorem. We next prove that $\P_{X_o^n}(\Ec_3^{c})$ converges to zero, almost surely. To prove this result, we employ the Borel Cantelli Lemma. By the Markov inequality, for any $\e>0$, $\P_A(\P_{X_o^n}(\Ec_3^{c})>\e)\leq \e^{-1}2^{-n(\log \log n)\bar{d}_o(X)(\delta-\d_5)/4  }$. Since  $\sum_{n=1}^{\infty}\P_A(\P_{X_o^n}(\Ec_3^{c})>\e)<\infty$, by the Borel Cantelli Lemma, as $n$ grows to infinity, $\P_{X_o^n}(\Ec_3^{c})$ converges to zero,  almost surely.

By the union bound, $\P((\Ec_1\cap\Ec_2\cap\Ec_3)^c)\leq \P(\Ec_1^c)+\P(\Ec_2^c)+\P(\Ec_3^c)$. Clearly $\P(\Ec_1^c)\to 0$, as $n\to\infty$.  Also, for our choice of parameter $b=b_n$, from \eqref{eq:bd-Ec-2}, as $n\to\infty$, $\P(\Ec_2^c)\to 0$ as well. Finally,  as we just proved, $\P_{X_o^n}(\Ec_3^c)$ converges to zero, almost surely.
But conditioned on $\Ec_1\cap\Ec_2\cap\Ec_3$, since $\Xh_o^n\in \Cc_n$, from \eqref{eq:basic-ineq-thm4},
\[
\sqrt{m(1-\tau)}\|X_o^n-\Xh_o^n\|_2\leq  n\Big(1+2\sqrt{m\over n}\Big)2^{-b_n+1},
\]
or
\begin{align}
{1\over \sqrt{n}}\|X_o^n-\Xh_o^n\|_2\leq \sqrt{ n\over m (1-\tau)}\Big(1+2\sqrt{m\over n}\Big)2^{-b_n+1}.\label{eq:bd-error-1}
\end{align}
Since $m/n\leq 1$, we have
\begin{align}
{1\over \sqrt{n}}\|X_o^n-\Xh_o^n\|_2&\leq  {3(\log n)^{1/(1+\upsilon)}\over \sqrt{2(1+\d)\bar{d}_o(X)}} 2^{-\log\log n+1}\nonumber\\
&\leq  {3(\log n)^{1/(1+\upsilon)}\over \sqrt{2(1+\d)\bar{d}_o(X)}} 2^{-\log\log n+1}\nonumber\\
&\leq  {6\over \sqrt{2(1+\d)\bar{d}_o(X)}} 2^{-\log\log n(1-1/(1+\upsilon))},
\end{align}
which goes to zero as $n$ grows to infinity. This concludes  the proof.

%-----------%-----------%-----------%-----------%-----------%-----------
\subsection{Proof of Theorem \ref{thm:4}}

We need to prove that for any $\e>0$,
\[
\P({1\over \sqrt{n}}\|X_o^n-\Xh_o^n\|_2>\e)\to 0,
\]
as $n\to \infty$.
Throughout the proof $\bar{d}_o$ refers to $\bar{d}_o(X)$. As before, let $X_o^n=[X_o^n]_b+q_o^n$. As we showed earlier, $\|q_o^n\|_2\leq \sqrt{n}2^{-b}$. Since $\Xh_o^n=\argmin_{u^n\in\Xc_b^n} (\hat{H}_{k}(u^n)+{\lambda\over n^2} \|Au^n-Y_o^m\|_2)$, we  have
\begin{align}
\hat{H}_{k}(\Xh_o^n)+{\lambda\over n^2} \|A\Xh_o^n-Y_o^m\|_2^2&\leq \hat{H}_{k}([X_o^n]_{b})+{\lambda\over n^2}  \|Aq_o^n\|_2^2,\nonumber\\
&\leq \hat{H}_{k}([X_o^n]_{b})+{\lambda(\sigma_{\max}(A))^22^{-2b}\over n} . \label{eq:min-lagrangian-cost-1}
\end{align}
%or
%\begin{align}
%\hat{H}_{k}([\Xh_o^n]_{b})+{\lambda\over n^2} \|A([\Xh_o^n]_b-[\Xh_o^n]_b)+A(\hat{q}^n_o-q^n_o)Y_o^m\|_2^2
%\end{align}
Define event $\Ec_1$ as $\Ec_1\triangleq \{\sigma_{\max}(A)\leq \sqrt{n}+2\sqrt{m}\}$, where  from \cite{CaTa05}, $\P(\Ec_1^c)\leq \ex^{-m/2}$. Also given $\e>0$, define event $\Ec_2$ as
\[
\Ec_2\triangleq \{{1\over b}\hat{H}_k([X_o^n]_b)\leq \bar{d}_o+\e\}.
\]
In the proof of Theorem \ref{thm:3}, we showed that, for any $\e>0$, given our choice of parameters, $\P(\Ec_2^c)$ converges to zero as $n$ grows to infinity. Conditioned on $\Ec_1\cap\Ec_2$, from \eqref{eq:min-lagrangian-cost-1}, we derive
\begin{align}
{1\over b}\hat{H}_{k}(\Xh_o^n)+{\lambda\over bn^2} \|A\Xh_o^n-Y_o^m\|_2^2 &\leq \bar{d}_o+\e+{\lambda2^{-2b}\over b}(1+2\sqrt{m/n})^2 \nonumber\\
&\leq \bar{d}_o+\e+{9\lambda2^{-2b}\over b}, \label{eq:min-lagrangian-cost-2}
\end{align}
where the last line holds since for $m\leq n$, $1+2\sqrt{m/n}\leq 3.$ On the other hand, for  $b=b_n=\lceil r\log\log n\rceil $ and $\l=\l_n=(\log n)^{2r}$, we have
\[
{9\lambda2^{-2b}\over b}\leq {9(\log n)^{2r}\over r(\log n)^{2r}\log\log n }={9\over r \log \log n},
\]
which goes to zero as $n$ grows to infinity. For $n$ large enough, $9/(r\log\log n)\leq \e$, and hence from \eqref{eq:min-lagrangian-cost-2},
\[
{1\over b}\hat{H}_{k}(\Xh_o^n)+{\lambda\over bn^2} \|A\Xh_o^n-Y_o^m\|_2^2 \leq \bar{d}_o+2\e,
\]
which implies that
\begin{align}
{1\over b}\hat{H}_{k}(\Xh_o^n) &\leq \bar{d}_o+2\e,\label{eq:upper-bd-Hk}
\end{align}
and since $\bar{d}_o\leq 1$,
\begin{align}
\sqrt{\lambda\over bn^2}\; \|A\Xh_o^n-Y_o^m\|_2 &\leq \sqrt{1+2\e}.\label{eq:upper-bd-dist}
\end{align}

For $x^n\in[0,1]^n$, from \eqref{eq:LZ-ub-Hk} in Appendix \ref{app:D},
\[
{1\over n}\ell_{\rm LZ}([x^n]_b)\leq \hat{H}_k([x^n]_b)+{b(kb+b+3)\over (1-\e_n)\log n-b}+\gamma_n,
\]
where  $\e_n=o(1)$ and $\g_n=o(1)$ are both independent of $x^n$.  Therefore there exists $n_{\e}$, such that for $n>n_{\e}$,  \[{1\over nb}\ell_{\rm LZ}([x^n]_b)\leq {1\over b}\hat{H}_k([x^n]_b)+\e.\] On the other hand, conditioned on $\Ec_1\cap\Ec_2$,  ${1\over b}\hat{H}_k(X_o^n) \leq \bar{d}_o+\e$, and ${1\over b}\hat{H}_k(\Xh_o^n) \leq \bar{d}_o+2\e$. Therefore, conditioned on $\Ec_1\cap\Ec_2$, $[X_o^n]_b, \Xh_o^n\in\Cc_n$, where
\[
\Cc_n\triangleq \{[x^n]_{b_n}: \; {1\over nb} \ell_{\rm LZ}([x^n]_{b_n})\leq \bar{d}_o+3\e\}.
\]
Define the event $\Ec_3$ as
\[
\Ec_3\triangleq \{\|A(u^n-[X_o^n]_b)\|_2\geq \|u^n-[X_o^n]_b\|_2\sqrt{(1-\tau)m}: \forall u^n\in\Cc_n\},
\]
where $\tau>0$. As we argued in the proof of Theorem \ref{thm:3}, for a fixed input vector $X_o^n$, by the union bound, we have
\[
\P_A( \Ec_3^c)\leq 2^{(\bar{d}_o+3\e)bn}\ex^{{m\over 2}(\tau+\ln(1-\tau))}.
\]
Let $\tau=1-(\log n)^{-{2r\over 1+f}}$, where $f>0$. Then, since $b=b_n\leq r\log\log n+1$, and $m=m_n>2(1+\d)\bar{d}_on$,
\begin{align}
\P_A( \Ec_3^c)&\leq 2^{(\bar{d}_o+3\e)(r\log\log n+1)n}2^{{m\over 2}(\log \ex-{2r\over 1+f}\log \log n)}\nonumber\\
&= 2^{2r\log\log n\left(n(\bar{d}_o+3\e)-{m\over 2(1+f)}+\alpha_n\right)}\nonumber\\
&\leq 2^{r(\log\log n)n\left(\bar{d}_o+3\e-({1+\d \over 1+f})\bar{d}_o+{1\over n}\alpha_n\right)},
\end{align}
where $\alpha_n=o(1)$. For $0<f<\delta$ and  $\e<(\d-f)\bar{d}_o/6(1+f)$, for $n$ large enough, $\bar{d}_o+3\e-({1+\d \over 1+f})\bar{d}_o+{1\over n}\alpha_n< -(\d-f)/2(1+f)$.  Let $\nu\triangleq (\d-f)/2(1+f)$. Then,  for $n$ large enough, $\P_A( \Ec_3^c)< 2^{-r\nu(\log\log n)n}$. Now applying Fubini's Theorem and the Borel Cantelli Lemma, similar to the proof of Theorem \ref{thm:3}, it follows that as $n$ grows to infinity, $\P_{X_o^n}( \Ec_3^c)\to 0$, almost surely.
%Note that
%\begin{align}
%\P((\Ec_1\cap\Ec_2\cap\Ec_3)^c)&\leq \P(\Ec_1^c)+\P(\Ec_2^c)+\P(\Ec_2\cap \Ec_3^c)\nonumber\\
%&\leq \P(\Ec_1^c)+\P(\Ec_2^c)+\P( \Ec_3^c|\Ec_2).
%\end{align}
%Since conditioned on $\Ec_1\cap\Ec_2$, $[X_o^n]_b, [\Xh_o^n]_b\in\Cc_n$, as we argued in the proof of Theorem \ref{thm:3}, by the union bound, it follows that
%\[
%\P( \Ec_3^c|\Ec_1\cap\Ec_2)\leq 2^{2(\bar{d}_o+3\e)bn}\ex^{{m\over 2}(\tau+\ln(1-\tau))}.
%\]
%Let $\tau=1-(\log n)^{-{2r\over 1+f}}$, where $f>0$. Then, since $b\leq r\log\log n+1$, and $m=m_n>2(1+\d)\bar{d}_on$,
%\begin{align}
%\P( \Ec_3^c|\Ec_1\cap\Ec_2)&\leq 2^{2(\bar{d}_o+3\e)(r\log\log n+1)n}2^{{m\over 2}(\log \ex-{2r\over 1+f}\log \log n)}\nonumber\\
%&= 2^{2r\log\log n\left(n(\bar{d}_o+3\e)-{m\over 2(1+f)}+\alpha_n\right)}\nonumber\\
%&\leq 2^{2r(\log\log n)n\left(\bar{d}_o+3\e-({1+\d \over 1+f})\bar{d}_o+{1\over n}\alpha_n\right)},
%\end{align}
%where $\alpha_n=o(1)$. Since $\e>0$ and $f>0$ can be selected arbitrarily small, they can be chosen such that $3\e<(\d-f)\bar{d}_o/(1+f)$, which yields $\P( \Ec_3^c|\Ec_1\cap\Ec_2)\to0$, as $n\to\infty$.

Conditioned on $\Ec_1\cap\Ec_2\cap\Ec_3$,  $\Xh_o^n\in\Cc_n$. Moreover, by the triangle inequality, $\|A\Xh_o^n-Y_o^m\|_2 \geq \|A(\Xh_o^n-[X_o^n]_b)\|_2-\|Aq_o^n\|_2$.  Therefore, conditioned on $\Ec_1\cap\Ec_2\cap\Ec_3$,  it follows from  \eqref{eq:upper-bd-dist} that
\begin{align}
\sqrt{\lambda(1-\tau)m\over bn^2}\; \|\Xh_o^n-[X_o^n]_b\|_2& \leq \sqrt{\lambda\over bn^2}\|Aq_o^n\|_2+ \sqrt{1+2\e}\nonumber\\
& \leq \sqrt{\lambda\over b}(1+2\sqrt{m/n})2^{-b}+ \sqrt{1+2\e}\nonumber\\
& \leq \sqrt{9\lambda\over b}2^{-b}+ \sqrt{1+2\e},
\end{align}
or
\[
{1\over \sqrt{n}}\|\Xh_o^n-[X_o^n]_b\|_2 \leq 2^{-b}\sqrt{9n\over (1-\tau)m}+ \sqrt{(1+2\e)bn\over (1-\tau)\lambda m}.
\]
Hence, for $\tau=1-(\log n)^{-{2r\over 1+f}}$, $\lambda=(\log n)^{2r}$, and $b=\lceil r\log\log n\rceil$,
\begin{align}
{1\over \sqrt{n}}\|\Xh_o^n-[X_o^n]_b\|_2 \leq {3+ \sqrt{(1+2\e) (r\log\log n+1)} \over\sqrt{\bar{d}_o(1+\d ) } (\log n)^{rf\over 1+f}},
\end{align}
which can be made arbitrarily small.

%-----------%-----------%-----------%-----------%-----------%-----------
\subsection{Proof of Theorem \ref{thm:5}}

The proof is very similar to the proof of Theorem \ref{thm:4}. Throughout the proof, for  ease of notation, $\bar{d}_o(X)$ is  denoted by $\bar{d}_o$.  Let $X_o^n=[X_o^n]_b+q_o^n$, $\e>0$, $\tau>0$ and $\Cc_n\triangleq \{[x^n]_{b}: \; {1\over nb} \ell_{\rm LZ}([x^n]_{b})\leq \bar{d}_o+3\e\}$. Define events $\Ec_1$, $\Ec_2$ and $\Ec_3$ as done in the proof of Theorem \ref{thm:4}, \ie  $\Ec_1\triangleq \{\sigma_{\max}(A)\leq \sqrt{n}+2\sqrt{m}\}$, and $\Ec_2\triangleq \{{1\over b}\hat{H}_k([X_o^n]_b)\leq \bar{d}_o+\e\}.$ and $\Ec_3\triangleq \{\|A(u^n-[X_o^n]_b)\|_2\geq \|u^n-[X_o^n]_b\|_2\sqrt{(1-\tau)m}: \forall u^n\in\Cc_n\}$. Also, define event $\Ec_4$ as
\[
\Ec_4\triangleq \{ \|Z^m\|_2\leq c_m\}.
\]

Since $\Xh_o^n$ is the minimizer of the cost function in \eqref{eq:MEP-lagrangian}, we have
\begin{align}
\hat{H}_{k}(\Xh_o^n)+{\lambda\over n^2} \|A\Xh_o^n-Y_o^m\|_2^2&\leq \hat{H}_{k}([X_o^n]_{b})+{\lambda\over n^2}  \|Aq_o^n+Z^m\|_2^2. \label{eq:min-lag-cost-noisy-1}
\end{align}
But $\|Aq_o^n+Z^m\|_2^2\leq \|Aq_o^n\|_2^2+\|Z^m\|_2^2+2\|Aq_o^n\|_2\|Z^m\|_2\leq (\sigma_{\max}(A))^2\|q_o^n\|_2^2+\|Z^m\|_2^2+2\sigma_{\max}(A)\|q_o^n\|_2\|Z^m\|_2$.   Since $\|q_o^n\|_2\leq \sqrt{n}2^{-b}$ and $m\leq n$,  conditioned on $\Ec_1\cap\Ec_2\cap\Ec_4$, we get
\begin{align}
\hat{H}_{k}(\Xh_o^n)+{\lambda\over n^2} \|A\Xh_o^n-Y_o^m\|_2^2&\leq \hat{H}_{k}([X_o^n]_{b})+{\lambda\over n^2} \Big((\sqrt{n}+2\sqrt{m})^2n2^{-2b}+c_m^2+2c_m(\sqrt{n}+2\sqrt{m})\sqrt{n}2^{-b}\Big)\nonumber\\
&= \hat{H}_{k}([X_o^n]_{b})+{\lambda} \Big((1+2\sqrt{m/ n})^22^{-2b}+\left({c_m\over n}\right)^2+2{c_m\over n}(1+2\sqrt{m/n})2^{-b}\Big)\nonumber\\
&\leq b(\bar{d}_o+\e)+{\lambda} \Big(9(2^{-2b})+\left({c_m\over n}\right)^2+{6c_m\over n}2^{-b}\Big).\label{eq:min-lag-cost-noisy-2}
\end{align}
Dividing both sides of \eqref{eq:min-lag-cost-noisy-2}, it follows that
\begin{align}
{1\over b}\hat{H}_{k}(\Xh_o^n)+{\lambda\over n^2b} \|A\Xh_o^n-Y_o^m\|_2^2&\leq \bar{d}_o+\e+{\l\over b} \Big(9(2^{-2b})+({c_m\over n})^2+{6c_m\over n}2^{-b}\Big).\label{eq:min-lag-cost-noisy-3}
\end{align}
By the theorem's assumption, $\l=\l_n=(\log n)^{2r}$ and $b=b_n=\lceil r\log\log n\rceil$. For this choice of the parameters,
\begin{align}
{\l\over b2^{2b}}\leq {(\log n)^{2r}\over  r(\log\log n) (\log n)^{2r}}={1\over r\log \log n},\label{eq:bd-1}
\end{align}
\begin{align}
{\l c_m^2\over bn^2}\leq {(\log n)^{2r} c_m^2\over r(\log \log n)n^2}= {1 \over r \log \log n}({c_m (\log n)^r \over n})^2,\label{eq:bd-2}
\end{align}
and finally
\begin{align}
{\l c_m\over bn2^b}\leq {(\log n)^{2r} c_m\over r(\log \log n)n(\log n)^{r} }= {1 \over r \log \log n}({c_m (\log n)^r \over n}).\label{eq:bd-3}
\end{align}
Since $c_m=O({m\over (\log m)^r})$, the right hand sides of \eqref{eq:bd-1}, \eqref{eq:bd-2} and \eqref{eq:bd-3} converge to zero as $n$ grows to zero. Therefore, for $n$ large enough,   ${\l\over b} (9(2^{-2b})+({c_m\over n})^2+{6c_m\over n}2^{-b})<\e$. The rest of the proof follows exactly as the proof of Theorem \ref{thm:4}.

%----------------------%----------------------%----------------------%----------------------
\section{Conclusions}\label{sec:conclusion}

In this paper we have studied the problem of universal compressed sensing, \ie the problem of recovering ``structured'' signals from their under-determined set of random linear projections without having prior information about the structure of the signal. We have considered structured signals that are modeled by stationary processes.  We have generalized R\'enyi's information dimension and defined information dimension of a stationary process, as a measure of  complexity for such processes. We have also calculated the information dimension of some stationary processes, such as Markov processes used to model piecewise constant signals.

We then have introduced the MEP  optimization approach for universal compressed sensing. The optimization is based on Occam's Razor and among the signals satisfying the measurements constraints seeks the ``simplest'' signal.  The  complexity of a signal is measured in terms of the conditional empirical entropy of its quantized version, which normalized by the quantization level serves as an estimator of the information dimension of the source.  We have proved that, asymptotically,  for $X_o^n$ generated by a $\Psi^*$-mixing process $X$ with upper information dimension of $\bar{d}_o(X)$, MEP requires slightly more that $\bar{d}_o(X)n$ random linear measurements to recover the signal. We have also provided an implementable version of MEP, the Lagrangian-MEP algorithm,  which is identical to the heuristic algorithm proposed in \cite{BaronD:11} and \cite{BaronD:12}. The Lagrangian-MEP algorithm has the same asymptotic performance as the original MEP, and is also robust to the measurement noise.  For memoryless sources with a mixture of discrete-continuous distribution, this result shows that both MEP and Lagrangian-MEP achieve the fundamental limits proved in \cite{WuV:10}, and therefore there is no loss in the performance due to not knowing the source distribution.
%--------------------------------------Appendices--------------------------------------
\setcounter{equation}{0}
\renewcommand{\theequation}{\thesection.\arabic{equation}}

\appendices
%
%%%------------------Appendix I -----------------
%%\renewcommand{\theequation}{A-\arabic{equation}}
%%% redefine the command that creates the equation no.
%%\setcounter{equation}{0}  % reset counter
%\section{Useful lemmas}\label{App:C}
%

%------------------Appendix II -----------------
%\renewcommand{\theequation}{B-\arabic{equation}}
%% redefine the command that creates the equation no.
%\setcounter{equation}{0}  % reset counter
\section{Connection between $\ell_{\rm LZ}$ and $\hat{H}_k$}\label{app:D}

In this appendix, we adapt the results of \cite{Ziv_inequality} to the case where the source is non-binary.  Since in this work in most cases we deal with  real-valued sources that are quantized at different number of quantization levels and the number of of quantization levels usually grows to infinity with blocklength, we need to derive all the constants carefully to make sure that the bounds are still valid, when the size of the alphabet depends on the blocklength.

Consider a finite-alphabet sequence $z^n\in\Zc^n$, where  $ |\Zc|=r$ and $r=2^{b}$, $b\in\mathds{N}.$\footnote{Restricting  the alphabet size to satisfy   this condition is to simplify the arguments, but the results can be generalized to any finite-alphabet source.} Let $N_{\rm LZ}(z^n)$ denote the number of phrases in $z^n$ after parsing it according the Lempel-Ziv algorithm \cite{LZ}.

For $k\in\mathds{N}$, let
\begin{align}
n_k=\sum_{j=1}^kjr^j={r^{k+1}(kr-(k+1))+r\over (r-1)^2}.\label{eq:n-k}
\end{align}
Let $N_{\rm LZ}(n)$ denote the maximum possible  number of phrases in a parsed sequence of length $n$. For $n=n_k$,
\begin{align}
N_{\rm LZ}(n_k)&\leq \sum_{j=1}^kr^j\nonumber\\
&={r(r^k-1)\over r-1}\nonumber\\
&={r^{k+1}(kr-(k+1))\over (r-1)(kr-(k+1))}-{r\over r-1}\nonumber\\
&\leq {(r-1)n_k\over kr-(k+1)}\nonumber\\
&\leq {n_k\over k-1}.\label{eq:upper-bd-nk}
\end{align}
Now given $n$,  assume that $n_k\leq n<n_{k+1}$, for some $k$.  It is straightforward to check that for $k\geq 2$, $n\geq n_k\geq r^k$. Therefore, if  $n>r(1+r)$,
\[
k\leq {\log n\over \log r}.
\]
On the other hand, $n<n_{k+1}$, where from \eqref{eq:n-k}
\begin{align*}
n_{k+1}&={r^{k+2}((k+1)r-(k+2))+r\over (r-1)^2}\\
&={r^{k+2}((r-1)\log_rn+r-2)+r\over (r-1)^2}.
\end{align*}
Therefore,
\begin{align*}
k+2\geq \log_r {n(r-1)^2-r\over (r-1)\log_r n+r-2},
\end{align*}
or
\begin{align}
k\geq (1-\e_n){\log n\over \log r},\label{eq:upper-bd-k}
\end{align}
where
\begin{align}
\e_n={\log (((r-1)\log_r n+r-2)r^2)\over \log n}.\label{eq:e-n}
\end{align}
%Hence, $\e_n=o(1)$, and is independent of $z^n$.

To pack the maximum possible number of phrases in a sequence of length $n$, we need to first pack all possible phrases of length smaller than or equal to $k$, then use phrases of length $k+1$ to cover the rest. Therefore,
\begin{align}
N_{\rm LZ}(n)&\leq N_{\rm LZ}(n_k)+{n-n_k\over k+1}\nonumber\\
&\leq  {n_k\over k-1} + {n-n_k\over k+1}\nonumber\\
&\leq {n\over k-1}.\label{eq:ub-N-LZ}
\end{align}
Combining \eqref{eq:ub-N-LZ} with \eqref{eq:upper-bd-k}, and noting that $\log r=b$,  yields
\begin{align}
{N_{\rm LZ}(n)\over n}\leq {b\over (1-\e_n)\log n-b}.\label{eq:N-LZ-ub}
\end{align}

Taking into account the number of bits required for describing the blocklength $n$, the number of phrases $N_{\rm LZ}$,   the  pointers and the extra symbols of phrases, we derive
\begin{align}
{1\over n}\ell_{\rm LZ}(z^n)={1\over n}N_{\rm LZ}\log N_{\rm LZ}+ {b\over n}N_{\rm LZ}+\eta_n,\label{eq:LZ-norm}
\end{align}
where
\begin{align}
\eta_n={1\over n}(\log n+2\log \log n +\log N_{\rm LZ}+2\log\log N_{\rm LZ}+2), \label{eq:eta-n}
\end{align}

On the other hand,  straightforward extension of the analysis presented in \cite{Ziv_inequality} to the case of general non-binary alphabets yields
\begin{align}
{1\over n}N_{\rm LZ}\log N_{\rm LZ}\leq \hat{H}_k(z^n)+{N_{\rm LZ}\over n}((\mu+1)\log(\mu+1)-\mu\log \mu +k\log r),\label{eq:Ziv-ineq-ext}
\end{align}
where $\mu\triangleq N_{\rm LZ}/n$. But, $(\mu+1)\log(\mu+1)-\mu\log \mu=\log(\mu+1)+\mu\log(1+1/\mu)\leq \log(\mu+1)+1/\ln 2<\log \mu+2$. Also, it is easy to show that for any value of $r$ and $z^n$, $n\leq \sum_{i=1}^{N_{\rm LZ}}l$, or  $N_{\rm LZ}(z^n)\geq \sqrt{2n}-1$, or $n/N_{\rm LZ}(z^n)\leq \sqrt{n}$, for $n$ large enough. Therefore, since $\mu^{-1}\log \mu$ is an increasing function of $\mu$,
\[
{\log \mu \over \mu} \leq {\log n\over 2\sqrt{n}}.
\]
Hence, combining \eqref{eq:LZ-norm}, \eqref{eq:N-LZ-ub} and \eqref{eq:Ziv-ineq-ext}, we conclude that, for $n$ large enough,
\begin{align}
{1\over n}\ell_{\rm LZ}(z^n)\leq \hat{H}_k(z^n)+{b(kb+b+3)\over (1-\e_n)\log n-b}+\gamma_n,\label{eq:LZ-ub-Hk}
\end{align}
where $\gamma_n=\eta_n+{\log n\over 2\sqrt{n}}$, and $\e_n$ and $\eta_n$ are defined in \eqref{eq:e-n} and \eqref{eq:eta-n}, respecctively. Note that $\gamma_n=o(1)$  and does not depend on $b$ or $z^n$.

\bibliographystyle{unsrt}
\bibliography{myrefs}

\end{document}